\documentclass[11pt]{article}

\usepackage[letterpaper,margin=1in]{geometry}

\usepackage[T1]{fontenc}
\usepackage[utf8]{inputenc}

\usepackage{amsmath}
\usepackage{amssymb}
\usepackage{amsthm}

\IfFileExists{newtxtext.sty}{%
  \usepackage{newtxtext}%
  \usepackage{newtxmath}%
}{%
  \usepackage{mathptmx}
}
\usepackage{bm}

\usepackage{mathtools}
\usepackage{microtype}

\usepackage{graphicx}
\graphicspath{{figures/}{./}}

\usepackage{algorithm}
\usepackage{algpseudocode}

\usepackage{tikz}
\usepackage{xcolor}

\newcommand{\ii}{\mathrm{i}}
\newcommand{\cH}{\mathcal H}
\newcommand{\cL}{\mathcal L}

\newcommand{\cP}{\mathcal P}

\newcommand{\cS}{\mathcal S}

\newcommand{\cR}{\mathcal R}

\newcommand{\Tr}{\operatorname{Tr}}

\newcommand{\wt}{\operatorname{wt}}
\newcommand{\cond}{\operatorname{cond}}
\newcommand{\Span}{\operatorname{span}}
\newcommand{\ip}[2]{\left\langle #1,#2\right\rangle}
\newcommand{\norm}[1]{\left\lVert #1\right\rVert}

\newcommand{\vecop}{\operatorname{vec}}
\def\argmax{\mathop{\rm arg\,max}}%
\def\argmin{\mathop{\rm arg\,min}}%

\usepackage{natbib}
\bibpunct[, ]{(}{)}{,}{a}{}{,}%
\usepackage[colorlinks=true,linkcolor=blue,citecolor=blue,urlcolor=blue]{hyperref}

\theoremstyle{plain}

\newtheorem{lemma}{Lemma}
\newtheorem{proposition}{Proposition}

\theoremstyle{definition}
\newtheorem{definition}{Definition}
\newtheorem{remark}{Remark}

\newenvironment{proof}[1]
  {\par\addvspace{6pt}\noindent\textit{#1}\enskip\ignorespaces}
  {\par\addvspace{6pt}}

\def\RUNAUTHOR#1{\gdef\theRUNAUTHOR{#1}}\def\theRUNAUTHOR{}
\def\RUNTITLE#1{\gdef\theRUNTITLE{#1}}\def\theRUNTITLE{}

\def\TITLE#1{\gdef\theTITLE{#1}}\def\theTITLE{}

\def\EMAIL#1{\texttt{#1}}
\newcommand{\authblock}{}
\def\AUTHOR#1{%
  \expandafter\gdef\expandafter\authblock\expandafter{\authblock
    \bigskip\par\begin{center}\large\textbf{#1}\end{center}}}
\def\AFF#1{%
  \expandafter\gdef\expandafter\authblock\expandafter{\authblock
    \par\begin{center}\small\itshape #1\end{center}}}
\long\def\ARTICLEAUTHORS#1{#1}

\long\def\ABSTRACT#1{\long\gdef\theABSTRACT{#1}}\def\theABSTRACT{}
\def\KEYWORDS#1{\gdef\theKEYWORDS{#1}}\def\theKEYWORDS{}

\def\theFUNDING{}
\def\theHISTORY{}

\makeatletter
\def\maketitle{%
  \begin{center}%
    {\LARGE\bfseries \theTITLE \par}%
  \end{center}%
  \authblock\par
  \bigskip
  \begin{abstract}%
    \noindent\theABSTRACT\par
    \ifx\theKEYWORDS\@empty\else
      \medskip\noindent\textbf{Key words:} \theKEYWORDS\par
    \fi
    \ifx\theFUNDING\@empty\else
      \medskip\noindent\textbf{Funding:} \theFUNDING\par
    \fi
    \ifx\theHISTORY\@empty\else
      \medskip\noindent\textbf{History:} \theHISTORY\par
    \fi
  \end{abstract}%
  \bigskip
}
\makeatother

\long\def\ACKNOWLEDGMENT#1{%
  \par\bigskip\noindent\textbf{Acknowledgments.}\enskip #1\par}

\newenvironment{APPENDIX}[1]
  {\par\appendix\section*{Appendix: #1}%
   \addcontentsline{toc}{section}{Appendix: #1}%
   \setcounter{section}{0}%
   }
  {\par}

\begin{document}



\RUNAUTHOR{Cipolla and Durastante}

\RUNTITLE{Pauli-Sparse Counterdiabatic Shortcuts for QAOA}

\TITLE{Pauli-Sparse regularised Counterdiabatic Shortcuts for Linear-Ramp QAOA}

\ARTICLEAUTHORS{%
\AUTHOR{Stefano Cipolla}
\AFF{School of Mathematical Sciences, University of Southampton, Southampton, UK, \EMAIL{s.cipolla@soton.ac.uk}}

\AUTHOR{Fabio Durastante}
\AFF{Department of Mathematics, University of Pisa, Pisa, Italy, \EMAIL{fabio.durastante@unipi.it}}
} 

\ABSTRACT{%
Combinatorial optimization is a leading target for quantum algorithms, but finite-depth QAOA
can suffer from strong diabatic errors when the interpolation Hamiltonian has small, or
exponentially small, spectral gaps.  We propose a Pauli-sparse counterdiabatic extension of
linear-ramp QAOA based on the regularised adiabatic gauge potential
\[
    \bigl(\mathcal L_H^2+\eta I\bigr)A_\lambda^{(\eta)}
    =
    -\mathrm{i}\mathcal L_H(\partial_\lambda H),
    \qquad
    \mathcal L_H(X)=[H,X].
\]
Instead of computing a dense AGP, we solve this equation approximately by an inexact
conjugate-gradient method in Pauli coordinates, truncating the Pauli expansion during the
iteration to obtain a gate-budget-aware set of implementable rotations.  The selected support is
then improved by a Galerkin refit and certified by an a posteriori residual bound. The regularization parameter \(\eta\) acts as an energy-resolution scale: it suppresses
transitions below \(\sqrt{\eta}\) while retaining larger-gap transitions.  Thus, the method can
avoid resolving exponentially small splittings inside a low-energy solution manifold while
reducing leakage away from it.  Numerical experiments on Ferromagnetic Chain (FC) and perturbed
FC--MaxCut/MarketSplit instances show that the resulting LR-CD-QAOA ansatz improves approximation
ratios over the uncorrected linear ramp, especially in regimes where LR-QAOA remains far from
the optimum. Overall, the proposed regularized LR-CD-QAOA framework substantially broadens the practical
applicability of QAOA to QUBO optimization by improving its robustness across heterogeneous
problem landscapes, including instances with near-degenerate low-energy structures and small
spectral gaps.
}%




\KEYWORDS{QUBO Optimization, Linear Ramp Counter-adiabatic QAOA, Inexact Conjugate Gradient, Sparse Pauli Adiabatic Gauge Potential } 

\maketitle



\section{Introduction}

Combinatorial optimization is one of the principal application areas for quantum, see \cite{Abbas2024QuantumOptimization}.
A standard modelling route is to encode a discrete objective function into an Ising or QUBO
Hamiltonian \(H_C\), so that low-energy quantum states represent high-quality classical
solutions \cite{Lucas2014,giovagnoli2025introductionquantumapproximateoptimization}.
This Hamiltonian viewpoint underlies both adiabatic quantum optimization and gate-model
variational algorithms.  In particular, the Quantum Approximate Optimization Algorithm
(QAOA) prepares a parameterized state by alternating evolutions under a cost Hamiltonian
and a mixer Hamiltonian \cite{FarhiGoldstoneGutmann2014,Hadfield2019,Zhou2020}.  QAOA
and its variants have consequently become a central testbed for understanding whether
near-term quantum processors can produce useful optimization performance on structured
combinatorial problems \cite{LotshawHumbleHerrmanOstrowskiSiopsis2021,
WilkieGaidaiOstrowskiHerrman2024,montanezbarrera2025evaluatingperformancequantumprocessing,PhysRevApplied.22.064068}.

A major practical obstacle is the classical optimization of the circuit parameters.  At depth
\(p\), standard QAOA contains \(2p\) angles, and the resulting training landscape is nonconvex.
This has motivated parameter-reduction strategies, including linear-ramp QAOA (LR-QAOA),
where the layer angles are constrained to follow an adiabatic-inspired schedule depending on
only a small number of amplitudes \cite{Dehn2025,McDowall2026}.  Such schedules reduce the
outer-loop optimization burden and are particularly attractive when the cost of parameter
training dominates the quantum runtime.  However, they inherit a central weakness of finite-time
adiabatic evolution: when the interpolating Hamiltonian
$
    H(\lambda)=(1-\lambda)H_B+\lambda H_C
$
has small spectral gaps, and especially exponentially small bottleneck gaps, a short ramp may
generate strong diabatic transitions away from the desired low-energy sector
\cite{arezzo2026continuoustimequantumcontrolexponentially}.  Hence, although LR-QAOA is
parameter efficient, it may be dynamically inefficient on hard Hamiltonians.

Counterdiabatic driving provides a principled way to address this problem.  In the ideal
continuous-time setting, the adiabatic gauge potential (AGP) generates a correction that cancels
diabatic transitions induced by changes in \(\lambda\)
\cite{Kolodrubetz2017,SelsPolkovnikov2017,Hatomura_2024}.  This idea has inspired several
counterdiabatic and shortcut-to-adiabaticity approaches to quantum optimization and QAOA
\cite{WurtzLove2021,PhysRevResearch.4.L042030,TakahashiDelCampo2024,
MorawetzPolkovnikov2025,wbbs-s8fs,tang2026weightednestedcommutatorsscalable,pqhl-nbtk}.
These works show that counterdiabatic corrections can be constructed from local variational
ansatz, commutator expansions, Krylov-space descriptions, or performance-guaranteed
control principles. In particular, regularised AGP viewpoint is not new in itself, see \cite{MorawetzPolkovnikov2025}.  The open
algorithmic question addressed here is different: how can one compute a Hamiltonian-adapted
regularised AGP for an optimization problem, while keeping only the Pauli terms that are useful
under a finite gate budget?

This paper proposes an answer based on an inexact Pauli-coordinate conjugate-gradient
construction.  We consider the regularised AGP equation
\begin{equation}
    \bigl(\mathcal L_{H(\lambda)}^2+\eta I\bigr)A_\lambda^{(\eta)}
    =
    -\mathrm{i}\,\mathcal L_{H(\lambda)}(G),
    \qquad
    G=\partial_\lambda H(\lambda)=H_C-H_B,
    \qquad
    \mathcal L_H(X)=[H,X].
    \label{eq:intro-regularised-agp}
\end{equation}
The operator \(\mathcal L_{H(\lambda)}^2+\eta I\) is Hermitian positive definite for
\(\eta>0\), so the AGP equation can be treated as a large linear system in Hilbert--Schmidt
space.  Instead of forming the dense \(4^n\)-dimensional Liouvillian matrix, we apply this
operator directly in the Pauli basis by symbolic commutators.  The conjugate-gradient
iteration is then deliberately made inexact by truncating the Pauli expansion after each step.
Thus the method does not first compute a dense AGP and then compress it; rather, it discovers
and refines a sparse set of Pauli generators during the iterative solve itself.

This distinction is important for optimization on quantum machines.  The regularization
parameter \(\eta\) acts as an energy-resolution scale: transitions with energy differences much
smaller than \(\sqrt{\eta}\) are suppressed, while larger transitions are retained.  Therefore, for
Hamiltonians with exponentially small splittings inside a target low-energy manifold, the
regularised AGP need not resolve every microscopic gap.  Instead, it can be tuned to suppress
leakage out of the useful low-energy sector.  This provides a mechanism for improving
finite-depth ramp dynamics even when exact adiabatic following would be obstructed by
exponentially small gaps.

The approach also connects naturally with the operations-research perspective on quantum
optimization.  Recent work has used quantum subroutines inside classical optimization
frameworks such as interior-point methods \cite{ApersGribling2023}, and optimization-based
methods have been developed for quantum circuit synthesis and compilation
\cite{NagarajanLockwoodCoffrin2021,HendersonNagarajanCoffrin2023}.  In contrast, the
present work uses classical sparse linear algebra and projection ideas to construct a quantum
optimization circuit.  The linear-algebra backbone is an inexact Krylov method; see, e.g.,
\cite{MR1990645,MR2058070,MR1740397}, while the Pauli algebra and conditioning arguments
rely on standard matrix-analysis facts from \cite{HornJohnson}.  The resulting algorithm is a hybrid
optimization method in a precise sense: a classical truncated Krylov solver selects the
counterdiabatic Pauli rotations, and the quantum circuit uses these rotations to improve the
optimization dynamics.

\subsection{Contributions, relation with existing work and paper organization}
\label{sec:contributions}

The contribution of this paper is a Pauli-sparse, regularised counterdiabatic extension of
LR-QAOA based on an inexact conjugate-gradient solution of the regularised AGP equation.
The method is designed for combinatorial optimization Hamiltonians whose low-energy
structure is useful even when internal spectral gaps are very small.  The paper is organized
around three main contributions. 

First, unlike existing CD-QAOA and shortcut-to-adiabaticity methods that typically prescribe
the counterdiabatic ansatz in advance through local AGP terms, nested commutators, weighted
commutators, exact or truncated commutator-generated AGP bases, or Krylov-space control
operators
\cite{SelsPolkovnikov2017,WurtzLove2021,PhysRevResearch.4.L042030,
TakahashiDelCampo2024,MorawetzPolkovnikov2025,SciPostPhys.18.1.014,
wbbs-s8fs,tang2026weightednestedcommutatorsscalable,pqhl-nbtk}, we start from the
regularised AGP equation \eqref{eq:intro-regularised-agp} and solve it directly in Pauli
coordinates.  The conjugate-gradient iteration is deliberately inexact: after each Krylov step
the Pauli expansion is truncated, so that the algorithm simultaneously approximates the AGP
and selects a gate-budget-aware set of implementable Pauli rotations.
Sections~\ref{sec:notation} and~\ref{sec:regularised-agp} introduce the Hamiltonian notation,
the regularised AGP equation, and the energy-filter interpretation that underpins this
construction.


Second, although regularised and approximate AGPs are already well established in
counterdiabatic driving \cite{Kolodrubetz2017,SelsPolkovnikov2017,Hatomura_2024}, we use
regularisation as an energy-resolution mechanism for discrete optimization.  In hard
Hamiltonians, exact adiabaticity may be limited by exponentially small gaps
\cite{arezzo2026continuoustimequantumcontrolexponentially}.  The regularised AGP suppresses
transitions below the scale \(\sqrt{\eta}\), allowing the method to avoid resolving exponentially
small splittings inside a low-energy manifold while still reducing leakage out of it.  This
differs from schedule-only LR-QAOA approaches \cite{Dehn2025,McDowall2026}, which modify
the ramp parameters but do not construct a Hamiltonian-adapted counterdiabatic operator.
Section \ref{subsec:residual_certificate} develops the projected formulation and the residual certificate that
make this energy-resolution viewpoint computable.

Third, the paper develops a sparse linear-algebra pipeline for producing an implementable CD
layer.  Instead of forming the \(4^n\)-dimensional Liouvillian matrix, we apply
\(\mathcal L_H^2+\eta I\) by symbolic Pauli commutators and use truncated CG as a
support-discovery method, importing ideas from inexact Krylov methods
\cite{MR1990645,MR2058070,MR1740397}.  The discovered support is then improved by a
Galerkin refit, separating Pauli-term selection from coefficient estimation, in contrast with
heuristic modifications of the phase operator, mixer, or schedule
\cite{Hadfield2019,Zhou2020,LotshawHumbleHerrmanOstrowskiSiopsis2021,
WilkieGaidaiOstrowskiHerrman2024}.  This is related in spirit to optimization-based quantum
circuit design \cite{NagarajanLockwoodCoffrin2021,HendersonNagarajanCoffrin2023}, but here
the optimization problem being solved is the regularised AGP linear system rather than a direct
circuit-synthesis problem.  Section \ref{sec:pauli_cg_agp} presents the truncated
Pauli-CG method, the Galerkin refit, and the a posteriori bound
\[
    \|A-A_\lambda^{(\eta)}\|_{\mathrm{HS}}
    \leq
    \frac{\|R\|_{\mathrm{HS}}}{\eta},
    \qquad
    R=b-\bigl(\mathcal L_H^2+\eta I\bigr)A .
\]
Section~\ref{sec:qaoa_implementation} then inserts the computed Pauli-sparse generator into
a low-dimensional LR-CD-QAOA circuit, where only
\((\Delta_\beta,\Delta_\gamma,\Delta_{\mathrm{CD}})\) are optimized rather than all \(2p\)
standard QAOA angles \cite{FarhiGoldstoneGutmann2014,Zhou2020,
giovagnoli2025introductionquantumapproximateoptimization}.

Finally, Section~\ref{sec:numerics} validates the method on Ferromagnetic Chain (FC) and
perturbed FC--MaxCut Hamiltonians.  These experiments show that plain LR-QAOA can remain
separated from the optimum when the small-gap FC structure dominates, whereas the
truncated Pauli-CG AGP correction substantially improves the approximation ratio by
suppressing diabatic leakage away from the low-energy solution manifold.  

Overall, the proposed regularized LR-CD-QAOA framework here proposed, substantially broadens the practical
applicability of QAOA to QUBO optimization by improving its robustness across heterogeneous
problem landscapes, including instances with near-degenerate low-energy structures and small
spectral gaps.

To conclude this section, we note that, for readability, all proofs of the results presented here are provided in the Appendix.

\subsection{Notation and problem setting} \label{sec:notation}

Let $n$ be the number of qubits and let $\cH=(\mathbb C^2)^{\otimes n}$.  For a positive Hermitian weight $\rho(H)$ satisfying $[\rho(H),H]=0$, define
\begin{equation*}
    (X,Y)_\rho
    :=\frac12\Tr\!\left[\rho(H)\left(X^\dagger Y+Y X^\dagger\right)\right],
\end{equation*}
where for $A \in \cH$, $A^{\dagger} = \bar{A}^T$. We also use {Hilbert–Schmidt inner product}
\begin{equation}
    \rho(H)=2^{-n}I,
    \qquad
    \ip{X}{Y}_{HS}:=(X,Y)_\rho=2^{-n}\Tr(X^\dagger Y),
    \qquad
    \norm{X}_{HS}^2=\ip{X}{X}{_{HS}}.
    \label{eq:hs-inner-product}
\end{equation}
The Pauli strings are
\begin{equation}\label{eq:pauli_strings}
 \mathcal{P}_n:= \{  P=P_1\otimes\cdots\otimes P_n \}, \hbox{  s.t. } 
    \qquad P_j\in\{I,X,Y,Z\},
\end{equation}
and satisfy $\ip{P}{Q}=\delta_{PQ}$.  The weight $\wt(P)$ is the number of non-identity single-qubit factors in $P$.

We consider the Hamiltonian $H(\lambda)$ acting on $\cH$ obtained \textit{annealing} the Hermitian  Hamiltonians  $H_B$ (\textit{mixer}) and $H_C$ (\textit{cost}),
    $H(\lambda)=(1-\lambda)H_B+\lambda H_C,
    \qquad \lambda\in[0,1]$, with derivative $G:=\partial_\lambda H(\lambda)=H_C-H_B$.
In numerical implementations, after fixing the
 computational basis, \(H_B\) and \(H_C\) are represented by a \(2^n\times 2^n\) Hermitian matrix. For a fixed Hermitian $H$, define the commutator superoperator, see~\cite{TakahashiDelCampo2024}, as
$
    \cL_H(X):=[H,X]=HX-XH.
$
We use the column-stacking vectorization operator $\vecop(\cdot)$.  For matrices $A,X,B$ of compatible
sizes,
$
    \vecop(AXB)=(B^T\otimes A)\vecop(X).
$
Therefore, the commutator superoperator has the Kronecker representation
\begin{equation}
    \vecop(\cL_H(X))=L_H\vecop(X),
    \qquad
    L_H:=I_N\otimes H-H^T\otimes I_N,
    \qquad N=2^n.
    \label{eq:liouvillian-kron}
\end{equation}
The square of the Liouvillian is represented hence, by
$
    \vecop(\cL_H^2(X))=L_H^2\vecop(X),
$
where
\begin{equation}
    L_H^2
    =I_N\otimes H^2-2H^T\otimes H+(H^T)^2\otimes I_N.
    \label{eq:expanded-kron-square}
\end{equation}
For Hermitian $H$, the matrix $L_H$ is Hermitian with respect to the Euclidean
inner product induced by \eqref{eq:hs-inner-product}.  Hence $L_H^2$ is
Hermitian positive semidefinite. We denote the 2-norm condition number by $\cond(\cdot)$. For Hermitian matrices, $X \succeq Y$ ($X \succ Y$) means $X - Y$ is positive semidefinite (definite).
\section{Regularised adiabatic gauge potential in commutator form} \label{sec:regularised-agp}


The exact adiabatic gauge potential $A_\lambda$ is the Hermitian operator that cancels diabatic transitions generated by the path $H(\lambda)$.  In finite dimension it may be characterized, after fixing the diagonal gauge, by the commutator equation, see \cite[Eq. (6)]{TakahashiDelCampo2024}
\begin{equation}
    \cL_{H(\lambda)}^2(A_\lambda)=-\ii\,\cL_{H(\lambda)}(G).
    \label{eq:unregularised-agp}
\end{equation}
Using the Kronecker formulation in~\eqref{eq:liouvillian-kron}, this is equivalently the $N^2$-dimensional linear system
\begin{equation}
    L_{H(\lambda)}^2\,\vecop(A_\lambda)
    =-\ii L_{H(\lambda)}\,\vecop(G).
    \label{eq:unregularised-agp-kron}
\end{equation}
The unregularised system is singular because $L^2_H$ has a nontrivial nullspace,
corresponding to operators that commute with $H$.   We therefore use the regularised Hermitian equation
\begin{equation}
   {
    \left(\cL_{H(\lambda)}^2+\eta I\right)A_\lambda^{(\eta)}
    =-\ii\,\cL_{H(\lambda)}(G),
    \qquad \eta>0,
    }
    \label{eq:regularised-agp}
\end{equation}
see \cite{MorawetzPolkovnikov2025}.  Equivalently,
$
    [H,[H,A_\lambda^{(\eta)}]]+\eta A_\lambda^{(\eta)}
    =-\ii[H,G]
$.
In Kronecker form, \eqref{eq:regularised-agp} becomes
\begin{equation}
   {
    \left(L_{H(\lambda)}^2+\eta I_{N^2}\right)\vecop(A_\lambda^{(\eta)})
    =-\ii L_{H(\lambda)}\vecop(G).
    }
    \label{eq:regularised-agp-kron}
\end{equation}
By expanding the square as in~\eqref{eq:expanded-kron-square}, the coefficient matrix is explicitly given by
\begin{equation*}
\begin{aligned}
    L_H^2+\eta I_{N^2}
    & = I_N\otimes H^2-2H^T\otimes H+(H^T)^2\otimes I_N+
    \eta I_{N^2}.
\end{aligned}
\end{equation*}
It is important to note also that the regularised equation can be interpreted as the normal equation associated with the regularised variation formulation
\begin{equation}
    \Phi_\eta(A)
    =\frac12\norm{G-\ii\cL_H(A)}^2_{HS}+\frac{\eta}{2}\norm{A}_{HS}^2= \frac{1}{2}\ip{A }{(\cL_H^2+\eta I)(A) }_{HS}  {+ \ip{\ii\cL_H(G)}{A}_{HS}} +C   ,
    \label{eq:tikhonov-functional}
\end{equation}
up to gauge-invariant diagonal terms, see \cite{SelsPolkovnikov2017}.  In vectorized form,
\begin{equation}
    \Phi_\eta(a)
    =\frac12\norm{g-\ii L_Ha}_2^2+\frac{\eta}{2}\norm{a}_2^2,
    \qquad a=\vecop(A),\quad g=\vecop(G).
    \label{eq:tikhonov-functional-kron}
\end{equation}
Since $L_H=L_H^\dagger$ for Hermitian $H$, stationarity of
\eqref{eq:tikhonov-functional-kron} gives
$(L_H^2+\eta I_{N^2})a=-\ii L_Hg$,
which is \eqref{eq:regularised-agp-kron}.  The right-hand side $-\ii[H,G]$ is
Hermitian whenever $H$ and $G$ are Hermitian, hence the solution of
\eqref{eq:regularised-agp} is also Hermitian.

\subsection{Spectral representation and filter interpretation}

To better understand the effect of filtering on the spectral components, we 
begin by expanding the Hamiltonian $H$ in an eigenbasis where it assumes a 
diagonal form. Let $H = \sum_{a=1}^N E_a |a\rangle\langle a|$, $N=2^n$,
where $\{|a\rangle\}_{a=1}^N \subset \mathbb{C}^N$ forms an orthonormal basis such that 
$\langle a | b \rangle = \delta_{ab}$, with $\delta_{ab}$ denoting the Kronecker delta. 
Furthermore, we define the energy differences and the matrix elements of an operator $G$ as
$\omega_{ab} = E_a - E_b$, for  $G_{ab} = \langle a|G|b\rangle$.
In this basis, the diagonal gauge choice implies $(A_\lambda)_{aa} = 0$.
\begin{lemma}\label{lem:filter-formula}
For $a\neq b$, the unregularised and regularised gauge potentials satisfy
\begin{equation}
    (A_\lambda)_{ab}=-\frac{\ii G_{ab}}{\omega_{ab}},
    \qquad
    (A_\lambda^{(\eta)})_{ab}
    =-\frac{\ii\omega_{ab}}{\omega_{ab}^2+\eta}G_{ab}.
    \label{eq:filter-formula}
\end{equation}
Thus
$
    (A_\lambda^{(\eta)})_{ab}
    =f_\eta(\omega_{ab})(A_\lambda)_{ab}$,
    and $
    f_\eta(\omega)=\frac{\omega^2}{\omega^2+\eta}.
$
\end{lemma}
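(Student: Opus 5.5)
The plan is to compute directly in the eigenbasis of $H$, where the commutator superoperator $\cL_H$ becomes diagonal. For any operator $X$ with matrix elements $X_{ab}=\langle a|X|b\rangle$, we have
\[
\langle a|[H,X]|b\rangle = E_aX_{ab}-X_{ab}E_b=\omega_{ab}X_{ab}.
\]
Hence $\cL_H$ acts entrywise as multiplication by $\omega_{ab}$, and iterating gives $\langle a|\cL_H^2(X)|b\rangle=\omega_{ab}^2X_{ab}$. Equivalently, in Kronecker language, after the unitary change of basis $U^T\otimes U^\dagger$ (or its analogue) the matrix $L_H$ of \eqref{eq:liouvillian-kron} becomes the diagonal matrix with entries $\omega_{ab}$. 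We do not need this form, however, since the matrix-element computation suffices.

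First take matrix elements of \eqref{eq:unregularised-agp}. This gives $\omega_{ab}^2(A_\lambda)_{ab}=-\ii\,\omega_{ab}G_{ab}$ for every pair $(a,b)$. For $a\neq b$ with $\omega_{ab}\neq0$, divide by $\omega_{ab}^2$ to obtain $(A_\lambda)_{ab}=-\ii G_{ab}/\omega_{ab}$. On the diagonal, and more generally wherever $\omega_{ab}=0$, the equation is vacuous. This is exactly the nullspace of $L_H^2$ noted after \eqref{eq:unregularised-agp-kron}, and it is fixed by the diagonal gauge $(A_\lambda)_{aa}=0$. The formula for $A_\lambda$ is therefore to be read under the implicit assumption of a nondegenerate pair, $\omega_{ab}\neq 0$. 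I will state this explicitly as the (standard) hypothesis under which $(A_\lambda)_{ab}$ is defined by the first formula.

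Next take matrix elements of \eqref{eq:regularised-agp}. This gives $(\omega_{ab}^2+\eta)(A_\lambda^{(\eta)})_{ab}=-\ii\omega_{ab}G_{ab}$. Since $\eta>0$, the factor $\omega_{ab}^2+\eta$ is strictly positive for every pair, so we can divide without any gap assumption, yielding the second formula in \eqref{eq:filter-formula}. A useful byproduct is that on degenerate pairs, including the diagonal, the regularised solution automatically vanishes, so no gauge fixing is needed. Finally, for $\omega_{ab}\neq0$ we write
\[
-\frac{\ii\omega_{ab}}{\omega_{ab}^2+\eta}G_{ab}=\frac{\omega_{ab}^2}{\omega_{ab}^2+\eta}\Bigl(-\frac{\ii G_{ab}}{\omega_{ab}}\Bigr)=f_\eta(\omega_{ab})(A_\lambda)_{ab},
\]
which gives the filter identity.

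The computation itself is routine. The only delicate point is bookkeeping around degeneracies: the unregularised formula, and hence the factorization through $f_\eta$, only makes sense when $\omega_{ab}\neq0$, whereas the regularised formula holds unconditionally. I would handle this by stating the filter identity for nondegenerate pairs and remarking that both sides vanish by convention (gauge choice) on degenerate ones.
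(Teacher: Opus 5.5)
Your proposal is correct and follows the paper's argument: the paper shows that $\operatorname{vec}(|a\rangle\langle b|)$ is an eigenvector of $L_H$ with eigenvalue $\omega_{ab}$ and reads off each equation's component along it, which is exactly your matrix-element identity $\langle a|[H,X]|b\rangle=\omega_{ab}X_{ab}$. Your caveat that the unregularised formula and the filter factorisation need $\omega_{ab}\neq 0$ also matches the paper's proof, which likewise divides only when $a\neq b$ and $\omega_{ab}\neq 0$.
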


As noted in~\cite{MorawetzPolkovnikov2025}, this regularization acts as a 
high-pass filter: transitions with $|\omega| \gg \sqrt{\eta}$ remain virtually 
unchanged, whereas transitions with $|\omega| \ll \sqrt{\eta}$ are suppressed.

\subsection{Distance to the unregularised gauge potential}

The filtering of the eigenvalues of $H$ introduced in Lemma~\ref{lem:filter-formula} 
induces a discrepancy between the original and regularized Hamiltonians. 
We quantify the resulting error in Proposition~\ref{prop:distance} by 
substituting the regularized operator into the original expression.

\begin{proposition}[Distance between $A_\lambda^{(\eta)}$ and $A_\lambda$]
\label{prop:distance}
Assume that 
$|\omega_{ab}|\geq \Delta>0$. Then
\begin{equation}
    \norm{A_\lambda^{(\eta)}-A_\lambda}_{HS}^2
    =\sum_{a\neq b}
    \frac{\eta^2}{\omega_{ab}^2(\omega_{ab}^2+\eta)^2}|G_{ab}|^2,
    \label{eq:distance-exact}
\end{equation}
and therefore
\begin{equation*}
   {
    \norm{A_\lambda^{(\eta)}-A_\lambda}_{HS}
    \leq
    \frac{\eta}{\Delta(\Delta^2+\eta)}\norm{G_{\rm off}}_{HS} {\text{ where } G_{\rm off} = \sum_{a \neq b} | G_{ab} |^2}.
    }
\end{equation*}
In particular, for fixed gap $\Delta$, the distance is $O(\eta)$ as $\eta\downarrow0$.
\end{proposition}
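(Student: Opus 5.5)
The plan is to work entirely in the eigenbasis $\{|a\rangle\}$ of $H=H(\lambda)$ and use the entrywise formulas of Lemma~\ref{lem:filter-formula}. The Hilbert--Schmidt norm is unitarily invariant, so it can be computed from the matrix entries in this basis: $\norm{X}_{HS}^2=2^{-n}\sum_{a,b}|X_{ab}|^2$. I will carry the normalisation $2^{-n}$ consistently on both sides; equivalently, I interpret $\norm{G_{\rm off}}_{HS}$ as the HS norm of the off-diagonal part $G_{\rm off}=\sum_{a\neq b}G_{ab}|a\rangle\langle b|$, so that the same factor appears in \eqref{eq:distance-exact} and in the bound.

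\textbf{Step 1 (diagonal entries).} For the diagonal, the gauge choice gives $(A_\lambda)_{aa}=0$. For the regularised solution, take the $(a,a)$ entry of \eqref{eq:regularised-agp}. Since $\cL_H(X)_{ab}=\omega_{ab}X_{ab}$, this entry reads $\eta (A^{(\eta)}_\lambda)_{aa}=-\ii\,\omega_{aa}G_{aa}=0$, and hence $(A^{(\eta)}_\lambda)_{aa}=0$. Only off-diagonal entries therefore contribute to the difference.

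\textbf{Step 2 (off-diagonal entries).} For $a\neq b$, Lemma~\ref{lem:filter-formula} gives
\[
(A^{(\eta)}_\lambda-A_\lambda)_{ab}=(f_\eta(\omega_{ab})-1)(A_\lambda)_{ab}=-\frac{\eta}{\omega_{ab}^2+\eta}\cdot\Bigl(-\frac{\ii G_{ab}}{\omega_{ab}}\Bigr).
\]
Its squared modulus is $\eta^2|G_{ab}|^2/(\omega_{ab}^2(\omega_{ab}^2+\eta)^2)$. Summing over $a\neq b$ gives \eqref{eq:distance-exact}.

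\textbf{Step 3 (the bound).} The function $h(\omega)=\eta/(|\omega|(\omega^2+\eta))$ is decreasing in $|\omega|$. Under $|\omega_{ab}|\ge\Delta$ each coefficient is therefore at most $\eta^2/(\Delta^2(\Delta^2+\eta)^2)$. Pulling this constant out of the sum leaves $\sum_{a\neq b}|G_{ab}|^2$, which is $\norm{G_{\rm off}}_{HS}^2$ up to the common normalisation. Taking square roots gives the stated bound. Since $\Delta^2+\eta\ge\Delta^2$, the bound is $\le \eta\,\Delta^{-3}\norm{G_{\rm off}}_{HS}$, which is $O(\eta)$ as $\eta\downarrow0$ for fixed $\Delta$.

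The main obstacle is bookkeeping rather than analysis. First, the hypothesis $|\omega_{ab}|\ge\Delta$ must be read as holding for all $a\neq b$, so that $H$ is nondegenerate and $A_\lambda$ is well defined. Second, the statement's definition of $G_{\rm off}$ as a scalar sum must be reconciled with the norm notation, by reading it as the off-diagonal operator and matching the $2^{-n}$ factor.
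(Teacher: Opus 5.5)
Your proposal is correct and follows the paper's own argument. Like the paper, you work entrywise in the eigenbasis of $H$, use Lemma~\ref{lem:filter-formula} to get $(A_\lambda^{(\eta)}-A_\lambda)_{ab}=\ii\eta G_{ab}/(\omega_{ab}(\omega_{ab}^2+\eta))$, bound the coefficient by its value at $|\omega_{ab}|=\Delta$, and invoke unitary invariance of the HS norm. Your extra checks are details the paper leaves implicit, and you handle them correctly: the $(a,a)$ entry of the regularised equation forces $(A_\lambda^{(\eta)})_{aa}=0$, the $2^{-n}$ normalisation must be carried on both sides (the paper's displayed identity omits it), and $G_{\rm off}$ should be read as the off-diagonal operator.
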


Proposition~\ref{prop:distance} quantifies the error introduced by regularizing the adiabatic gauge potential. The exact Hilbert--Schmidt distance in~\eqref{eq:distance-exact} shows that $\eta$ suppresses contributions from small energy gaps $\omega_{ab}$. For a spectrum gapped by $\Delta$, the upper bound is controlled by the off-diagonal elements of $G$. Crucially, as $\eta \downarrow 0$, the distance vanishes as $O(\eta)$, ensuring the unregularized potential is faithfully recovered away from degeneracies. However, this bound diverges as $\Delta \to 0$. To show that the regularization handles these singularities globally, Proposition~\ref{prop:boundedness} establishes a uniform upper bound on the norm of the potential that remains valid even in the gapless limit.

\begin{proposition}
\label{prop:boundedness}
For every $\eta>0$,
$
   {
    \norm{A_\lambda^{(\eta)}}_{HS}
    \leq
    \frac{1}{2\sqrt\eta}\norm{G_{\rm off}}_{HS}.
    }
$
\end{proposition}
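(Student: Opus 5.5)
The plan is to work entirely in the eigenbasis of $H=H(\lambda)$ and use the entrywise filter formula of Lemma~\ref{lem:filter-formula}. Since the Hilbert--Schmidt norm is unitarily invariant, we have $\norm{X}_{HS}^2 = 2^{-n}\sum_{a,b}|X_{ab}|^2$ in any orthonormal basis, in particular in the eigenbasis $\{|a\rangle\}$. The same normalisation is implicit in $\norm{G_{\rm off}}_{HS}$, read as the HS norm of the off-diagonal part of $G$ in this basis. The diagonal gauge gives $(A_\lambda^{(\eta)})_{aa}=0$. This also holds for pairs with $a\neq b$ but $\omega_{ab}=0$ (degenerate levels), because the equation $(\omega_{ab}^2+\eta)(A^{(\eta)}_\lambda)_{ab}=-\ii\omega_{ab}G_{ab}$ forces the entry to vanish there. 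Therefore
\[
\norm{A_\lambda^{(\eta)}}_{HS}^2 = 2^{-n}\sum_{a\neq b}\frac{\omega_{ab}^2}{(\omega_{ab}^2+\eta)^2}|G_{ab}|^2 .
\]

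The key step is a scalar bound uniform in $\omega$. Define $h(\omega)=|\omega|/(\omega^2+\eta)$. By AM--GM, $\omega^2+\eta\geq 2\sqrt{\eta}\,|\omega|$, so $h(\omega)\leq 1/(2\sqrt\eta)$ for every real $\omega$, including $\omega=0$. Equality holds at $|\omega|=\sqrt\eta$, which is exactly the filter scale. Inserting $h(\omega_{ab})^2\leq 1/(4\eta)$ termwise gives
\[
\norm{A_\lambda^{(\eta)}}_{HS}^2\leq \frac{1}{4\eta}\,2^{-n}\sum_{a\neq b}|G_{ab}|^2=\frac{1}{4\eta}\norm{G_{\rm off}}_{HS}^2 .
\]
Taking square roots yields the claim. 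No gap assumption is needed, which is the point of contrast with Proposition~\ref{prop:distance}.

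The only delicate step is bookkeeping rather than analysis. Lemma~\ref{lem:filter-formula} is stated for $a\neq b$ with the tacit assumption $\omega_{ab}\neq0$, so degenerate pairs must be handled directly from the regularised equation, as indicated above. One must also make sure the normalisation of $\norm{G_{\rm off}}_{HS}$ matches the $2^{-n}$ factor in \eqref{eq:hs-inner-product}; the paper's displayed definition of $G_{\rm off}$ is a sum of squares, so it is interpreted as the operator $\sum_{a\neq b}G_{ab}|a\rangle\langle b|$. If one prefers a basis-free check, the same bound follows from the spectral calculus of the Hermitian superoperator $\cL_H$. Writing $A^{(\eta)}_\lambda=-\ii(\cL_H^2+\eta I)^{-1}\cL_H(G)$, the operator norm of $(\cL_H^2+\eta I)^{-1}\cL_H$ is $\sup_\omega h(\omega)\leq 1/(2\sqrt\eta)$. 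This superoperator annihilates the kernel of $\cL_H$, which contains the diagonal part of $G$, so only $G_{\rm off}$ contributes.
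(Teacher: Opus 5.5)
Your proposal is correct and follows the paper's proof: the paper also bounds each entry as $|(A_\lambda^{(\eta)})_{ab}|\le |G_{ab}|/(2\sqrt\eta)$ via the filter formula and the maximum of $|\omega|/(\omega^2+\eta)$, then squares and sums. Your handling of degenerate pairs and of the normalisation of $\norm{G_{\rm off}}_{HS}$ fills in details the paper leaves implicit. One small correction: the vanishing of the diagonal entries is forced by the regularised equation itself (since $\eta>0$), not by a gauge choice.
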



\subsection{Well-posedness and conditioning}

In addition to ensuring analytical boundedness, the regularization provides numerical stability. Computing the adiabatic gauge potential via~\eqref{eq:regularised-agp-kron} requires inverting a Liouvillian system that is ill-conditioned near degeneracies. Proposition~\ref{prop:conditioning} shows how the parameter $\eta$ bounds this condition number both globally and on the off-diagonal subspace.

\begin{proposition}
\label{prop:conditioning}
Let $L_H=I_N\otimes H-H^T\otimes I_N$ be the Kronecker matrix representation
of $\cL_H$.  Let $\Omega=\norm{L_H}_2=\norm{\cL_H}_2$.  The regularised matrix
$
    L_H^2+\eta I_{N^2}
$
is Hermitian positive definite and satisfies
\begin{equation}
    \cond(L_H^2+\eta I_{N^2})
    \leq
    \frac{\Omega^2+\eta}{\eta}.
    \label{eq:full-conditioning}
\end{equation}
If we restrict to the off-diagonal subspace spanned by matrix units
$|a\rangle\langle b|$, $a\neq b$, on which all nonzero gaps satisfy
$|\omega_{ab}|\geq\Delta$, then
\begin{equation}
    \cond\!\left((L_H^2+\eta I_{N^2})|_{\rm off}\right)
    \leq
    \frac{\Omega^2+\eta}{\Delta^2+\eta}.
    \label{eq:gauge-fixed-conditioning}
\end{equation}
Furthermore,
\begin{equation}
    \Omega\leq 2\norm{H}_2,
    \qquad
    \cond(L_H^2+\eta I_{N^2})
    \leq 1+\frac{4\norm{H}_2^2}{\eta}.
    \label{eq:conditioning-h-bound}
\end{equation}
\end{proposition}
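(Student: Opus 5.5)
The plan is to diagonalise $L_H$ explicitly and read all three bounds off its spectrum. Take the eigenbasis $H=\sum_a E_a|a\rangle\langle a|$ of Lemma~\ref{lem:filter-formula}. The matrix units $|a\rangle\langle b|$ form an orthonormal basis of operator space for the Hilbert--Schmidt inner product, up to the harmless normalisation $2^{-n}$. They are eigenvectors of the commutator superoperator:
\[
\cL_H(|a\rangle\langle b|)=(E_a-E_b)|a\rangle\langle b|=\omega_{ab}|a\rangle\langle b|.
\]
Equivalently, in the Kronecker picture $H^T=\bar U\,\mathrm{diag}(E)\,U^T$. Hence $L_H$ is unitarily similar, via $\bar U\otimes U$, to $I\otimes D-D\otimes I$ with $D=\mathrm{diag}(E_1,\dots,E_N)$. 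The eigenvalues of $L_H$ are therefore exactly $\{\omega_{ab}\}_{a,b}$.

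Consequently $L_H^2+\eta I_{N^2}$ is Hermitian, since $L_H$ is Hermitian (as already noted in the notation section). It is unitarily diagonalised with eigenvalues $\omega_{ab}^2+\eta\ge\eta>0$, so it is positive definite. Its 2-norm condition number is the ratio of extreme eigenvalues. The largest is $\max_{a,b}\omega_{ab}^2+\eta=\Omega^2+\eta$, because $\Omega=\norm{L_H}_2=\max|\omega_{ab}|$ for the Hermitian matrix $L_H$. The smallest is at least $\eta$; it equals $\eta$ because the diagonal units $a=b$ give $\omega_{aa}=0$. This proves \eqref{eq:full-conditioning}.

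For \eqref{eq:gauge-fixed-conditioning}, the off-diagonal subspace $\Span\{|a\rangle\langle b|:a\ne b\}$ is invariant under $L_H$, being a span of eigenvectors. The restricted operator is therefore Hermitian with eigenvalues $\omega_{ab}^2+\eta$, $a\neq b$. Under the hypothesis $|\omega_{ab}|\ge\Delta$, these lie in $[\Delta^2+\eta,\Omega^2+\eta]$; the upper end still holds because the maximal gap is attained off the diagonal. The ratio gives the bound. One subtlety deserves a remark: the statement says "all nonzero gaps". If degenerate pairs $a\ne b$ with $\omega_{ab}=0$ exist, the restriction should be understood on the span of units with $\omega_{ab}\neq0$, which is again invariant. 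I would state this interpretation explicitly; this is the only point requiring care.

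Finally, for \eqref{eq:conditioning-h-bound} I would use $|\omega_{ab}|=|E_a-E_b|\le|E_a|+|E_b|\le2\max_a|E_a|=2\norm{H}_2$. Alternatively, the triangle inequality gives $\norm{L_H}_2\le\norm{I\otimes H}_2+\norm{H^T\otimes I}_2=2\norm{H}_2$. Substituting $\Omega^2\le4\norm{H}_2^2$ into \eqref{eq:full-conditioning} yields $\cond\le1+4\norm{H}_2^2/\eta$.
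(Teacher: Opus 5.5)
Your proof is correct and takes essentially the same route as the paper. Both diagonalise $L_H$ on the matrix units $|a\rangle\langle b|$ to obtain the eigenvalues $\omega_{ab}^2+\eta$, read off the full and off-diagonal condition numbers from the extreme eigenvalues, and bound $\Omega\le 2\norm{H}_2$ by a triangle inequality; the paper applies it to $\norm{HX-XH}_{HS}$ rather than to the Kronecker factors or eigenvalue differences. Your remark that the off-diagonal bound requires excluding degenerate pairs $a\neq b$ with $\omega_{ab}=0$ makes explicit an assumption that the paper's proof uses only implicitly.
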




\begin{remark}
The role of $\eta$ is twofold.  First, it suppresses small-gap components that would otherwise produce large coefficients in the exact gauge potential.  Second, it gives a uniformly well-posed projected linear system.  Hence regularization is not merely a numerical trick: it also acts as a circuit-amplitude and rotation-budget control mechanism.
\end{remark}


\begin{definition}
{Throughout} the paper, we distinguish between the operator-level regularised Liouvillian
$\mathcal B_\eta := \cL_H^2+\eta I$,
acting on operators equipped with the Hilbert--Schmidt inner product, and its Kronecker
matrix representation $B_\eta := L_H^2+\eta I_{N^2}$.
When no ambiguity is possible, statements about \(\mathcal B_\eta\) and \(B_\eta\) are identified
through the vectorization isomorphism.
\end{definition}

\begin{lemma}
\label{lem:spectral_floor}
$L_H^2+\eta I_{N^2}\succeq\eta I_{N^2}$, with
$\lambda_{\min}(B_\eta)=\eta$ on the full Pauli space and
$\lambda_{\min}(B_\eta|_{\mathcal V})\ge\eta$ on any restricted subspace
$\mathcal V$ {, where $B_\eta|_{\mathcal{V}} = V^\dagger B_\eta V$ for $V$ an orthonormal basis of $\mathcal{V}$}. In particular $\norm{B_\eta^{-1}}_2\le\eta^{-1}$.
\end{lemma}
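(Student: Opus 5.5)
The proof rests on the fact, already noted after \eqref{eq:expanded-kron-square}, that $L_H$ is Hermitian when $H$ is Hermitian, so that $L_H^2=L_H^\dagger L_H\succeq 0$. I will make this explicit. For any $x\in\mathbb C^{N^2}$,
\[
x^\dagger L_H^2 x=\norm{L_Hx}_2^2\ge 0,
\]
and therefore
\[
x^\dagger(L_H^2+\eta I_{N^2})x=\norm{L_Hx}_2^2+\eta\norm{x}_2^2\ge\eta\norm{x}_2^2.
\]
This is exactly $B_\eta\succeq\eta I_{N^2}$. The same statement holds at operator level through the vectorization isomorphism, since $2^{-n}\Tr(X^\dagger\cL_H^2X)=\norm{\cL_H X}_{HS}^2$.

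Next, I will show that the lower bound $\eta$ is attained on the full space. Here I use the fact that $\cL_H$ has a nontrivial kernel: for example, $I$ and $H$ itself commute with $H$. Taking $x=\vecop(I)\neq 0$ gives $L_Hx=0$, so $B_\eta x=\eta x$. Hence $\eta$ is an eigenvalue of $B_\eta$, and combined with the bound above, $\lambda_{\min}(B_\eta)=\eta$. The Pauli basis is an orthonormal basis of the same space, so the statement "on the full Pauli space" follows at once.

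For a subspace $\mathcal V$ with orthonormal basis matrix $V$ (so $V^\dagger V=I$), take any unit vector $y$. Then $Vy$ is also a unit vector, and
\[
y^\dagger V^\dagger B_\eta Vy=(Vy)^\dagger B_\eta(Vy)\ge\eta.
\]
By the Rayleigh quotient characterization, $\lambda_{\min}(V^\dagger B_\eta V)\ge\eta$. Equality need not hold here, because $\mathcal V$ may avoid the kernel of $L_H$. This is why the statement gives only an inequality on restricted subspaces.

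Finally, $B_\eta$ is Hermitian positive definite, so its 2-norm equals its largest eigenvalue and $\norm{B_\eta^{-1}}_2=1/\lambda_{\min}(B_\eta)=\eta^{-1}$. This gives the claimed bound, with equality. The same argument applies to the compressed matrix $V^\dagger B_\eta V$, giving $\norm{(V^\dagger B_\eta V)^{-1}}_2\le\eta^{-1}$. None of these steps is a real obstacle. The only point needing care is the Hermiticity of $L_H$ in the Euclidean inner product, which follows from $(H^T)^\dagger=\bar H=H^T$ because $H^\dagger=H$.
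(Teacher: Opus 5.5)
Your proof is correct and follows essentially the same route as the paper: $L_H^2=L_H^\dagger L_H\succeq 0$ gives the floor $\eta$, a kernel vector of $L_H$ shows the floor is attained, and a Rayleigh-quotient argument (the paper cites Courant--Fischer) handles the compression $V^\dagger B_\eta V$. The only difference is cosmetic: the paper reads off the full spectrum $\{\omega_{ab}^2+\eta\}$ from the eigenbasis of $H$ in Proposition~\ref{prop:conditioning}, with the minimum attained at the diagonal units $|a\rangle\langle a|$, whereas you avoid diagonalizing $H$ by using $\vecop(I)$ directly.
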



\section{Galerkin projections and a posteriori residual certificate}
\label{subsec:residual_certificate}

{Let \(\mathcal P_n\) denote the set of \(n\)-qubit Hermitian Pauli strings~\eqref{eq:pauli_strings}.} For a Hermitian operator $X$ we write
$\operatorname{supp}(X):=\{P\in\cP_n:\langle P,X\rangle_{\rm HS}\neq 0\}$ for
its Pauli support.

\begin{definition}
\label{def:restricted_master}
For a fixed support $\cS=\{P_1,\dots,P_\ell\}\subset\cP_n$, the
\emph{restricted problem} is
\begin{equation} \label{eq:restricted_problem}
A_{\cS}^{(\eta)} :=
\argmin\bigl\{\Phi_\eta(A)\;:\;A\in\Span\cS\bigr\},
\end{equation}
with $\Phi_\eta$ being the regularised functional in \eqref{eq:tikhonov-functional}. In vectorized form, setting $A_{\cS}^{(\eta)}=Q_{\cS}a^{(\eta)}$, we obtain the Galerkin projected  equation
$	Q_{\cS}^\dagger(L_H^2+\eta I)Q_{\cS}\,a^{(\eta)}
	=
	-\ii\,Q_{\cS}^\dagger L_H\vecop(G)$,
see \cite[Chap. 5]{MR1990645}. The \emph{residual equation} is then
\begin{equation}
	\cR_{\cS}
	:=
	-\ii\,\cL_H(G)-\cL_H^2(A_{\cS}^{(\eta)})-\eta A_{\cS}^{(\eta)}
	=
	\operatorname{unvec}\bigl(b-B_\eta A_{\cS}^{(\eta)}\bigr),
	\label{eq:equation_residual}
\end{equation}
and for $P\in\cP_n$ we set $\sigma_P:=\langle P,\cR_{\cS}\rangle_{\rm HS}$.
The first-order optimality conditions of the restricted problem are the
Galerkin orthogonality relations
\begin{equation} \label{eq:galerkin_ortogonality}
	\sigma_{P_j}=0,\qquad P_j\in\cS,
\end{equation}
while for $P\notin\cS$ the coefficient $\sigma_P$ is the gradient of
the full quadratic objective along the missing direction $P$.
\end{definition}

\begin{lemma}[Residual support]
\label{lem:residual_support}
Let $A_{\cS}^{(\eta)}\in\Span\cS$ and let $\cR_{\cS}$ be the equation residual
\eqref{eq:equation_residual}. Define
\begin{equation}
	\cR(\cS)
	:=
	\operatorname{supp}\bigl(-\ii\,\cL_H(G)\bigr)
	\,\cup\,
	\cS
	\,\cup\,
	\bigcup_{P_j\in\cS}\operatorname{supp}\bigl(\cL_H^2(P_j)\bigr).
	\label{eq:residual_support_set}
\end{equation}
Then $\operatorname{supp}(\cR_{\cS})\subseteq\cR(\cS)$. Consequently
$\sigma_P=0$ for every $P\notin\cR(\cS)$.
\end{lemma}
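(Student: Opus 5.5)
The plan is to expand every term of the residual \eqref{eq:equation_residual} in the Pauli basis and use the fact that the support of a finite linear combination lies in the union of the supports of its summands. The key tool is that $\mathcal P_n$ is an orthonormal basis of the operator space for $\ip{\cdot}{\cdot}_{HS}$. Hence every operator $X$ has the expansion $X=\sum_{P\in\mathcal P_n}\ip{P}{X}_{HS}P$, and $\operatorname{supp}(X)$ is exactly the set of strings carrying a nonzero coefficient. From linearity of $X\mapsto\ip{P}{X}_{HS}$, if $X=\sum_k c_kX_k$ and $P\notin\bigcup_k\operatorname{supp}(X_k)$, then $\ip{P}{X}_{HS}=\sum_k c_k\ip{P}{X_k}_{HS}=0$. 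This gives
\[
\operatorname{supp}\Bigl(\sum_k c_kX_k\Bigr)\subseteq\bigcup_k\operatorname{supp}(X_k).
\]

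Next I would write $A_{\cS}^{(\eta)}=\sum_{j=1}^{\ell}a_jP_j$, which is possible because $A_{\cS}^{(\eta)}\in\Span\cS$. Since $\cL_H^2$ is linear, $\cL_H^2(A_{\cS}^{(\eta)})=\sum_j a_j\cL_H^2(P_j)$. Substituting into \eqref{eq:equation_residual} gives the residual as a finite linear combination:
\[
\cR_{\cS}=-\ii\,\cL_H(G)-\sum_{j}a_j\,\cL_H^2(P_j)-\eta\sum_j a_jP_j .
\]
The three groups of terms have supports contained in $\operatorname{supp}(-\ii\cL_H(G))$, in $\bigcup_j\operatorname{supp}(\cL_H^2(P_j))$, and in $\cS$ (since $\operatorname{supp}(P_j)=\{P_j\}$), respectively. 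Applying the union property above gives $\operatorname{supp}(\cR_{\cS})\subseteq\cR(\cS)$. The consequence is then immediate: for $P\notin\cR(\cS)$ we have $P\notin\operatorname{supp}(\cR_{\cS})$, so $\sigma_P=\ip{P}{\cR_{\cS}}_{HS}=0$ by the definition of support.

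I expect no real obstacle here. The only point needing care is bookkeeping: the paper defines $\operatorname{supp}$ through the HS inner product and also writes the residual via $\operatorname{unvec}$. I would state explicitly that vectorization is an isometry intertwining $\ip{\cdot}{\cdot}_{HS}$ with the Euclidean inner product up to the factor $2^{-n}$, so the two forms of $\cR_{\cS}$ in \eqref{eq:equation_residual} have the same Pauli coefficients. The argument uses only linearity of $\cL_H$ and orthonormality of $\mathcal P_n$. It does not need Hermiticity of the coefficients $a_j$, nor the Galerkin conditions \eqref{eq:galerkin_ortogonality}; the lemma holds for any $A\in\Span\cS$, as stated.
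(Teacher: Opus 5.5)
Your proposal is correct and follows essentially the same route as the paper. The paper also expands $A_{\cS}^{(\eta)}=\sum_j a_jP_j$, writes $\cR_{\cS}=-\ii\,\cL_H(G)-\sum_j a_j\cL_H^2(P_j)-\eta\sum_j a_jP_j$, and uses orthonormality of the Pauli strings to conclude that every coefficient outside $\cR(\cS)$ vanishes; your explicit union-of-supports step and your remark on the vectorized form of the residual only make that bookkeeping more precise.
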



\begin{lemma}[Locality bound]
\label{lem:locality_bound}
Assume $H=\sum_{\alpha}h_\alpha Q_\alpha$, where each $Q_\alpha$ is a Pauli
string with $\wt(Q_\alpha)\le 2$ and each qubit belongs to the support of at
most $\Delta_H$ terms $Q_\alpha$. Then, for any Pauli string $P$,
$
|\operatorname{supp}(\cL_H(P))|\le \Delta_H\,\wt(P)$,
with $|\operatorname{supp}(\cL_H^2(P))|\le \Delta_H^2\,\wt(P)\bigl(\wt(P)+1\bigr)
\le \Delta_H^2\bigl(\wt(P)+1\bigr)^2$,
and, with $w_{\cS}:=\max_{P_j\in\cS}\wt(P_j)$, then
	$|\cR(\cS)|
	\;\le\;
	\bigl|\operatorname{supp}\bigl(-\ii\,\cL_H(G)\bigr)\bigr|
	\;+\;
	|\cS|\Bigl[\,1+\Delta_H^2\,(w_{\cS}+1)^2\,\Bigr]$.
The set $\cR(\cS)$ is computable by symbolic Pauli commutation in
$O(|\cR(\cS)|)$ string operations.
\end{lemma}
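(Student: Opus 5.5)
The plan is to reduce everything to the elementary fact that two Pauli strings either commute or anticommute. For Pauli strings $Q,P$ we have $[Q,P]=0$ when they commute and $[Q,P]=2QP$ when they anticommute. Here $QP$ is, up to a phase in $\{\pm 1,\pm\ii\}$, again a single Pauli string. Moreover, $Q$ and $P$ can anticommute only if their supports overlap on at least one qubit; otherwise every single-qubit factor pair commutes.

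By linearity, $\cL_H(P)=\sum_\alpha h_\alpha[Q_\alpha,P]$. Only those $\alpha$ with $\operatorname{supp}(Q_\alpha)\cap\operatorname{supp}(P)\neq\emptyset$ can contribute, and each contributes at most one Pauli string. Each of the $\wt(P)$ qubits in the support of $P$ lies in the support of at most $\Delta_H$ terms, so at most $\Delta_H\wt(P)$ indices $\alpha$ qualify. This gives $|\operatorname{supp}(\cL_H(P))|\le\Delta_H\wt(P)$.

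For the second commutator, write $\cL_H^2(P)=\cL_H(\cL_H(P))$ and expand $\cL_H(P)$ as a combination of at most $\Delta_H\wt(P)$ strings $P'$. Each $P'$ is proportional to $Q_\alpha P$ with $\wt(Q_\alpha)\le 2$ and $Q_\alpha$ overlapping $P$ in at least one qubit. Hence $\wt(P')\le\wt(P)+1$: at most one new qubit is added, since $Q_\alpha$ has weight at most two and shares a qubit with $P$. Applying the first bound to each $P'$ gives at most $\Delta_H(\wt(P)+1)$ strings. Bounding the support of a sum by the sum of the supports, and ignoring possible cancellations, which only help, we get
\[
|\operatorname{supp}(\cL_H^2(P))|\le \Delta_H\wt(P)\cdot\Delta_H(\wt(P)+1)\le\Delta_H^2(\wt(P)+1)^2 .
\]

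The bound on $|\cR(\cS)|$ then follows from the union bound on \eqref{eq:residual_support_set} in Lemma~\ref{lem:residual_support}. The first set contributes $|\operatorname{supp}(-\ii\cL_H(G))|$ and $\cS$ contributes $|\cS|$. Each $P_j\in\cS$ contributes at most $\Delta_H^2(\wt(P_j)+1)^2\le\Delta_H^2(w_\cS+1)^2$, using monotonicity in the weight.

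For the complexity claim, the same enumeration constructs $\cR(\cS)$. For each $P_j$ one lists the at most $\Delta_H\wt(P_j)$ overlapping terms via a qubit-to-term incidence table. One then tests commutation by counting anticommuting single-qubit positions, forms the product string, and repeats once more. Each produced string costs $O(1)$ operations, treating the weight as bounded, as the statement implicitly does. The total work is therefore proportional to the number of strings generated, and by the estimates above this is bounded by the same right-hand side as $|\cR(\cS)|$. Reading the complexity as $O$ of that bound, together with deduplication via hashing, yields the claim.

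The main subtlety is the weight-growth step $\wt(P')\le\wt(P)+1$. It relies on the overlap forced by anticommutation together with $\wt(Q_\alpha)\le2$. A further point of care is that the $O(|\cR(\cS)|)$ count should really be read as $O$ of the bound, since cancellations can make $|\cR(\cS)|$ smaller than the number of strings generated.
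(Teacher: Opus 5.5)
Your proof is correct and is essentially the paper's own argument. Anticommutation forces overlapping supports, so at most $\Delta_H\wt(P)$ terms contribute, one string each; overlap together with $\wt(Q_\alpha)\le 2$ gives $\wt(P')\le\wt(P)+1$; one iteration plus the union bound over the three pieces of $\cR(\cS)$ gives the counts, and explicit enumeration gives the cost. Your caveat on the operation count is not needed for the $\cS$-part: $\cS\subseteq\cR(\cS)$, and at most $|\cS|\,\Delta_H w_{\cS}\bigl(1+\Delta_H(w_{\cS}+1)\bigr)$ strings are generated from $\cS$, which is already $O(|\cR(\cS)|)$ once $\Delta_H$ and $w_{\cS}$ are treated as constants.
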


For the linear-ramp interpolation $H(\lambda)=(1-\lambda)H_B+\lambda H_C$ with
$H_B=-\sum_iX_i$ and $H_C$ containing single-qubit $Z_i$ fields and
$Z_iZ_k$ couplings on an interaction graph of maximum degree~$d$, each qubit
appears in at most~$d$ coupling terms plus one $Z$-field and one $X$-field
term, so the hypothesis of Lemma~\ref{lem:locality_bound} holds with
$\Delta_H\le d+2$. Hence, for bounded-weight supports, $|\cR(\cS)|$ grows
linearly in $|\cS|$ and the residual support can be enumerated explicitly,
without ever touching the $4^n$-dimensional ambient space.

\begin{remark}
Assume that, after solving the restricted problem \eqref{eq:restricted_problem} on $\cS$, the set
$\cR(\cS)$ has been computed. By the Galerkin orthogonality
\eqref{eq:galerkin_orthogonality}, $\sigma_P=0$ on $\cS$; by
Lemma~\ref{lem:residual_support}, $\sigma_P=0$ outside $\cR(\cS)$. The
\emph{pricing problem}
$
	\max_{P\in\cP_n\setminus\cS}|\sigma_P|
	=
	\max_{P\in\cR(\cS)\setminus\cS}|\sigma_P|
$ over the full $4^n$-dimensional Pauli space is therefore solved
\emph{exactly} by enumerating the finite set $\cR(\cS)\setminus\cS$,
classically and without measurements. A maximizer
	$P_\star
	\in
	\argmax_{P\in\cR(\cS)\setminus\cS}|\sigma_P|$ 
generates the enlarged support $\cS'=\cS\cup\{P_\star\}$.

\end{remark}

In practice, when working within a restricted subspace $\cS$, the exact full-space 
solution $A_\lambda^{(\eta)}$ remains unknown, making it difficult to assess the 
quality of the Galerkin approximation $A_{\cS}^{(\eta)}$. To address this, we establish 
an \emph{a posteriori} error certificate. The following Proposition~\ref{prop:residual_certificate} demonstrates 
that by monitoring the residual coefficients, more specifically, the ``leakage'' of the 
solution under the action of the Liouvillian just outside the support $\cS$, we 
can produce an upper-bound for both the functional error and the Hilbert--Schmidt 
distance to the exact regularized potential using entirely accessible, computable quantities.

\begin{proposition}[A posteriori residual certificate]
\label{prop:residual_certificate}
Let $A_\lambda^{(\eta)}$ be the exact regularised AGP solving
\eqref{eq:regularised-agp-kron} and let $A_{\cS}^{(\eta)}$ be the restricted master
solution of Definition~\ref{def:restricted_master}. Suppose
$
|\sigma_P|\le\delta$
for all $P\in\cR(\cS)\setminus\cS$,
and set $N:=|\cR(\cS)\setminus\cS|$. Then, with
$E_{\mathcal{S}}:=A_{\cS}^{(\eta)}-A_\lambda^{(\eta)}$,
\begin{itemize}
    \item[(i)] $\Phi_\eta(A_{\cS}^{(\eta)})-\Phi_\eta\bigl(A_\lambda^{(\eta)}\bigr)
	=
	\tfrac12\bigl\langle E_{\mathcal{S}},\;\cL_H^2(E_{\mathcal{S}})+\eta E_{\mathcal{S}}\bigr\rangle_{\rm HS}
	\le
	\frac{1}{2\eta}\sum_{P\in\cR(\cS)\setminus\cS}\sigma_P^2
	\le
	\frac{\delta^2N}{2\eta}$,
    \item[(ii)] $\norm{E}_{\rm HS}
	\le
	\frac{1}{\eta}\Bigl(\sum_{P\in\cR(\cS)\setminus\cS}\sigma_P^2\Bigr)^{1/2}
	\le
	\frac{\delta\sqrt N}{\eta}$.
\end{itemize}
All bounds are computable a posteriori from the residual coefficients
$\{\sigma_P\}_{P\in\cR(\cS)\setminus\cS}$ and $\eta$ alone.
\end{proposition}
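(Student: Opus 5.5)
The plan is to express everything through the error equation satisfied by $E_{\cS}$ and then use the spectral floor of Lemma~\ref{lem:spectral_floor}. Since $A_\lambda^{(\eta)}$ solves $\mathcal B_\eta A_\lambda^{(\eta)} = b$ with $b=-\ii\cL_H(G)$, the residual in \eqref{eq:equation_residual} satisfies
\[
\cR_{\cS}=b-\mathcal B_\eta A_{\cS}^{(\eta)}=\mathcal B_\eta\bigl(A_\lambda^{(\eta)}-A_{\cS}^{(\eta)}\bigr)=-\mathcal B_\eta E_{\cS}.
\]
Everything else will follow from this identity. The key input is that, by orthonormality of Pauli strings, $\norm{\cR_{\cS}}_{HS}^2=\sum_{P}\sigma_P^2$. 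Galerkin orthogonality \eqref{eq:galerkin_ortogonality} kills the terms with $P\in\cS$. Lemma~\ref{lem:residual_support} kills the terms with $P\notin\cR(\cS)$. Hence
\[
\norm{\cR_{\cS}}_{HS}^2=\sum_{P\in\cR(\cS)\setminus\cS}\sigma_P^2\le N\delta^2.
\]
The coefficients $\sigma_P$ are real because $\cR_{\cS}$ is Hermitian, so squares are harmless.

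For (i), I will expand the quadratic functional. Writing $\Phi_\eta(A)=\tfrac12\langle A,\mathcal B_\eta A\rangle_{HS}-\mathrm{Re}\langle b,A\rangle_{HS}+C$ in the form \eqref{eq:tikhonov-functional}, set $A=A_\lambda^{(\eta)}+E_{\cS}$ and use $\mathcal B_\eta A_\lambda^{(\eta)}=b$ together with self-adjointness of $\mathcal B_\eta$. The linear terms in $E_{\cS}$ then cancel, leaving
\[
\Phi_\eta(A_{\cS}^{(\eta)})-\Phi_\eta(A_\lambda^{(\eta)})=\tfrac12\langle E_{\cS},\mathcal B_\eta E_{\cS}\rangle_{HS}.
\]
Since $\mathcal B_\eta E_{\cS}=-\cR_{\cS}$, this equals $\tfrac12\langle \cR_{\cS},\mathcal B_\eta^{-1}\cR_{\cS}\rangle_{HS}$. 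This is at most $\tfrac1{2\eta}\norm{\cR_{\cS}}_{HS}^2$, because $\norm{\mathcal B_\eta^{-1}}\le\eta^{-1}$ by Lemma~\ref{lem:spectral_floor}. Inserting the sum of the $\sigma_P^2$ and the bound $N\delta^2$ gives both inequalities in (i).

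For (ii), I will write $E_{\cS}=-\mathcal B_\eta^{-1}\cR_{\cS}$. Then $\norm{E_{\cS}}_{HS}\le\eta^{-1}\norm{\cR_{\cS}}_{HS}$, again by Lemma~\ref{lem:spectral_floor}, and the same residual identity finishes the argument. The statement's $E$ is read as $E_{\cS}$.

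The main obstacle is a bookkeeping one. I need to check that the sign and conjugation conventions of $\Phi_\eta$ in \eqref{eq:tikhonov-functional}, which carry the term $+\langle \ii\cL_H(G),A\rangle$ and hold "up to gauge-invariant diagonal terms", are consistent with the residual defined in \eqref{eq:equation_residual}. The test is that the stationarity condition of $\Phi_\eta$ is exactly $\mathcal B_\eta A=b$. I will verify this in the vectorized form \eqref{eq:tikhonov-functional-kron}, where it is stated explicitly, and work there with real parts so that the cross term cancels cleanly. A second point to confirm is that the $HS$ norm is unaffected by the $2^{-n}$ normalisation, since the Pauli strings are orthonormal in $\langle\cdot,\cdot\rangle_{HS}$.
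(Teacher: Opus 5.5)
Your proposal is correct and follows essentially the same route as the paper. Both arguments rest on the error identity $E_{\cS}=-\mathcal B_\eta^{-1}\cR_{\cS}$, the quadratic expansion of $\Phi_\eta$ about $A_\lambda^{(\eta)}$ in which the linear term cancels, the spectral floor of Lemma~\ref{lem:spectral_floor}, and Parseval combined with Galerkin orthogonality and Lemma~\ref{lem:residual_support} to reduce $\|\cR_{\cS}\|_{HS}^2$ to the sum over $\cR(\cS)\setminus\cS$. Your extra care with real parts and with the Hermiticity of $\cR_{\cS}$ only makes explicit details the paper leaves implicit.
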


The algebraic properties established in Section~\ref{sec:residual_certificate} guarantee that the regularized linear system is well-conditioned and globally solvable. However, in practical quantum simulation and many-body contexts, the full Hilbert space is often too large to handle, requiring us to project the problem onto a restricted operator support $\cS$ via a Galerkin framework. The natural question is therefore under what conditions does this subspace projection yield the exact regularized adiabatic gauge potential? The following Proposition~\ref{prop:exact_agp} shows that  if the chosen support is closed under action of the Liouvillian commutator, the projected solution matches the full-space solution identically, eliminating any truncation error.

\begin{proposition}[Exactness at commutation-closed supports]
\label{prop:exact_agp}
Suppose the support $\cS$ is commutation closed, i.e.
\begin{equation}
	\cR(\cS)\subseteq\cS .
	\label{eq:closed_support}
\end{equation}
Then $\cR_{\cS}=0$ and $A_{\cS}^{(\eta)}=A_\lambda^{(\eta)}$. 
\end{proposition}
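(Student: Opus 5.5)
The plan is to combine the two facts already established about the residual coefficients $\sigma_P=\langle P,\cR_{\cS}\rangle_{\rm HS}$, and then use the spectral floor to identify the solution.

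\textbf{Step 1: every residual coefficient vanishes.} Lemma~\ref{lem:residual_support} gives $\operatorname{supp}(\cR_{\cS})\subseteq\cR(\cS)$. Under the closure hypothesis \eqref{eq:closed_support}, this yields $\operatorname{supp}(\cR_{\cS})\subseteq\cS$. The Galerkin orthogonality relations \eqref{eq:galerkin_ortogonality} state that $\sigma_{P_j}=0$ for every $P_j\in\cS$. Hence $\sigma_P=0$ for all $P\in\cS$. For $P\notin\cS$, the support inclusion above gives $P\notin\operatorname{supp}(\cR_{\cS})$, so $\sigma_P=0$ as well.

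\textbf{Step 2: the residual is zero.} The Pauli strings are orthonormal for $\langle\cdot,\cdot\rangle_{\rm HS}$ and have cardinality $4^n=\dim$ of the operator space, so they form a basis. We may therefore expand
\[
\cR_{\cS}=\sum_{P\in\cP_n}\sigma_P P=0 .
\]
This relies on the Pauli strings spanning all operators, not only Hermitian ones, so the expansion is valid even without checking Hermiticity of $\cR_{\cS}$. Hermiticity does hold anyway, since $-\ii\cL_H(G)$ and $\cL_H^2(A)+\eta A$ are Hermitian for Hermitian $A$.

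\textbf{Step 3: identification with the full solution.} From \eqref{eq:equation_residual}, $\cR_{\cS}=0$ means $A_{\cS}^{(\eta)}$ solves $\mathcal B_\eta A=-\ii\cL_H(G)$. Lemma~\ref{lem:spectral_floor} gives $B_\eta\succeq\eta I$, so $B_\eta$ is invertible and the solution of \eqref{eq:regularised-agp-kron} is unique. Hence $A_{\cS}^{(\eta)}=A_\lambda^{(\eta)}$. Alternatively, this follows directly from Proposition~\ref{prop:residual_certificate}(ii): all $\sigma_P$ vanish, so $\norm{E_{\cS}}_{\rm HS}\le 0$.

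\textbf{Main obstacle.} Step 1 is a direct citation, so the only delicate point is in Step 2: making sure that vanishing of all $\sigma_P$ genuinely forces $\cR_{\cS}=0$. This requires the Pauli basis to be complete, and $\cR(\cS)$ to be defined through the support inclusion rather than only as a set on which the coefficients are computed. Both are covered by Lemma~\ref{lem:residual_support} together with the orthonormality of $\cP_n$.
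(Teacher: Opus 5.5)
Your proposal is correct and matches the paper's proof step for step. The closure hypothesis together with Lemma~\ref{lem:residual_support} confines $\operatorname{supp}(\cR_{\cS})$ to $\cS$, Galerkin orthogonality makes the coefficients on $\cS$ vanish, so $\cR_{\cS}=0$, and uniqueness then follows from $B_\eta\succ 0$ via Lemma~\ref{lem:spectral_floor}; your remarks on completeness of the Pauli basis and the alternative via Proposition~\ref{prop:residual_certificate}(ii) (where $\cR(\cS)\setminus\cS=\emptyset$) are valid but not needed.
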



\section{Fused Pauli--Krylov solver for the regularised AGP}
\label{sec:pauli_cg_agp}

Using the Hilbert--Schmidt inner product in~\eqref{eq:hs-inner-product} the Pauli strings~\eqref{eq:pauli_strings} form an orthonormal basis:
$
    \langle P,Q\rangle_{\rm HS}=\delta_{PQ}$, with $P,Q\in\mathcal P_n$.
Hence any Hermitian operator \(A\) can be written as $
    A=\sum_{P\in\mathcal P_n} a_P P$, for $a_P=\langle P,A\rangle_{\rm HS}$. 
Instead of storing \(A\) as a dense matrix, the fused Pauli--CG method stores
only a sparse coefficient dictionary $A
    \equiv
    \{(P,a_P):a_P\neq 0\}$.
Since the Pauli basis is orthonormal, applying CG in Pauli coordinates is
equivalent, in exact arithmetic and without truncation, to applying CG in any
other orthonormal representation of the Hilbert--Schmidt space. Thus the
Pauli formulation is not a different linear solver; it is the same CG
iteration expressed in a basis in which the commutator algebra is sparse.
{Indeed, the results in Lemma~\ref{lem:locality_bound} and Proposition~\ref{prop:residual_certificate} characterize how the dictionary increases after an application of the regularised operator, and a bound on the number of strings needed to represent the solution. In the following two subsections we make the construction practical by showing how the operator can be applied, and how this can be employed to run the Conjugate Gradient method with both a controllable memory consumption and certified error approximation.}

\subsection{Symbolic application of \texorpdfstring{\({\mathcal{B}_\eta}\)}{B eta}{: an arithmetic for Pauli expansions}} \label{sec:symbolic_application}

Assume that the Hamiltonian has a sparse Pauli expansion $
    H(\lambda)=\sum_{j=1}^{m_H} h_j Q_j,
    \qquad Q_j\in\mathcal P_n$.
Then for a Pauli string \(P \in \cP_n\), $
    \mathcal L_H(P)
    =
    [H,P]
    =
    \sum_{j=1}^{m_H} h_j[Q_j,P]$.
Each commutator is either zero or proportional to another Pauli string.
Indeed, if \(Q_j\) and \(P\) commute, then $[Q_j,P]=0$, whereas if they anticommute, then $[Q_j,P]=2Q_jP$,
up to the corresponding Pauli phase convention. Therefore
$   {\mathcal{B}_\eta} P
    =
    \mathcal L_H^2(P)+\eta P
$
can be computed by two symbolic commutator applications followed by collection
of equal Pauli strings. Indeed, for a sparse Pauli vector $p=\sum_{P\in S_p} p_P P$, 
defining $\mathcal L_H(p):=[H,p]$, we have
$    {\mathcal{B}_\eta} p
    =
    \mathcal L_H^2(p)+\eta p
    =
    [H,[H,p]]+\eta p $.
Therefore the action of \({\mathcal{B}_\eta}\) on \(p\) can be computed matrix-free by two symbolic
commutator applications:
\[
    p
    \xrightarrow{\mathcal L_H}
    [H,p]
    \xrightarrow{\mathcal L_H}
    [H,[H,p]]
    \xrightarrow{+\eta p}
    {\mathcal{B}_\eta} p .
\]
We have hence
$    [H,p]
    =
    \sum_j \sum_{P\in\mathcal S_p} h_j p_P [Q_j,P]$.
Using the properties of commutation/anticommutation explained above, it is possible to observe that each commutator maps a sparse
Pauli expansion to another sparse Pauli expansion: equal Pauli strings produced by different terms are then collected by summing their coefficients. Thus \({\mathcal{B}_\eta} p\) is obtained without forming the full matrix representation of
\({\mathcal{B}_\eta}\); only sparse dictionaries of Pauli strings and coefficients need to be manipulated.

\begin{remark}\label{rem:qtt_stuff}
This construction closely mirrors Quantized Tensor Train (QTT) methods~\cite{MR3416059}, where a $2^n \times 2^n$ matrix is reshaped into a high-dimensional tensor and compressed. Because the Pauli matrices~\eqref{eq:pauli_strings} form a complete orthogonal basis for $2 \times 2$ matrices, a local basis change on the QTT cores maps the computational basis to the Pauli basis. Consequently, the QTT representation of $H$ is mathematically equivalent to a standard Tensor Train (TT) decomposition~\cite{MR2837533} of its $n$-dimensional Pauli coefficient tensor. Exploiting this Pauli expansion format is highly advantageous for efficiently computing the required commutators.
\end{remark}

\subsection{Conjugate Gradients Method}

{By means of the arithmetic for Pauli expansions we have just described, we can show how to apply the Conjugate Gradient algorithm to compute the AGP by exploiting the sparse Pauli structure of these expansions.} Let us start from  the regularised AGP equation~\eqref{eq:regularised-agp} in {operator} form,
\begin{equation}
{\mathcal{B}_\eta\,A_\lambda^{(\eta)}
	=
	b,
	\qquad
	\mathcal{B}_\eta:=\cL_H^2+\eta I,
	\qquad
	b:=-\ii\,\cL_H(G),}
	\label{eq:krylov_regularised_agp_system}
\end{equation}
where \(G=\partial_\lambda H\){, and \(\mathcal{B}_\eta\), \(A_\lambda^{(\eta)}\), \(b\) are operators on the Hilbert--Schmidt space rather than their \(N^2\)-dimensional vectorizations}. Since \(H=H^\dagger\), the {Liouvillian \(\cL_H\)} is {self-adjoint with respect to \(\langle\cdot,\cdot\rangle_{\rm HS}\)}, hence {\(\mathcal{B}_\eta\)} is Hermitian positive definite for every \(\eta>0\). Therefore the system
\eqref{eq:krylov_regularised_agp_system} can be solved by conjugate gradient,
even though the right-hand side \(b\) is complex; see, e.g.,~\cite{MR1226007}. Starting from \({A_0}=0\), after \(k\) CG iterations one obtains ${A_k}\in \mathcal K_k({\mathcal{B}_\eta},b)$,
where $\mathcal K_k({\mathcal{B}_\eta},b)
	:=
	\operatorname{span}
	\{b,{\mathcal{B}_\eta} b,\ldots,{\mathcal{B}_\eta}^{k-1}b\}$. 
Since \({\mathcal{B}_\eta}={\cL_H^2}+\eta I\), this Krylov space coincides with
$
\mathcal K_k({\cL_H^2},b)
=
\operatorname{span}
\{b,{\cL_H^2}b,\ldots,{\cL_H^{2(k-1)}}b\}$. Consequently, the CG iterate can be written as
	${A_k}=p_{k-1}({\cL_H^2})b$
for a polynomial \(p_{k-1}\) of degree at most \(k-1\). Since the exact solution is
${A_\lambda^{(\eta)}}=({\cL_H^2}+\eta I)^{-1}b{{}=\mathcal{B}_\eta^{-1}b}
$, 
CG constructs a polynomial approximation to the scalar filter
$
s\mapsto \frac{1}{s+\eta}$.  It is important to note that CG steps discover the dominant
regularised Krylov components of the AGP.  Looking in more details at the Conjugate Gradient, starting from \(A_0=0\), the initial residual is
$
    r_0=b-{\mathcal{B}_\eta} A_0=b$,
and the first search direction is
$    p_0=r_0$.
At iteration \(k\), one computes $w_k={\mathcal{B}_\eta} p_k$, 
and chooses the CG step length
$
    \alpha_k
    =
    \frac{\langle r_k,r_k\rangle_{\rm HS}}
         {\langle p_k,w_k\rangle_{\rm HS}}$.
The AGP approximation and residual are then updated as $
    A_{k+1}=A_k+\alpha_k p_k$, and $r_{k+1}=r_k-\alpha_k w_k$.
The new search direction is $p_{k+1}=r_{k+1}+\beta_k p_k$, where
$
    \beta_k
    =
    \frac{\langle r_{k+1},r_{k+1}\rangle_{\rm HS}}
         {\langle r_k,r_k\rangle_{\rm HS}}$.
It is important to remind the reader that in the
\({\mathcal{B}_\eta}\)-energy norm, see \cite{MR1990645},
\begin{equation*}
	\|{A_\lambda^{(\eta)}}-{A_k}\|_{{\mathcal{B}_\eta}}
	\leq
	2
	\left(
	\frac{\sqrt{\cond({\mathcal{B}_\eta})}-1}
	{\sqrt{\cond({\mathcal{B}_\eta})}+1}
	\right)^k
	\|{A_\lambda^{(\eta)}}-{A_0}\|_{{\mathcal{B}_\eta}}.
\end{equation*}
Since $\cond({\mathcal{B}_\eta})
\leq
1+\frac{\|{\cL_H}\|^2}{\eta}$,
regularization directly controls the number of CG iterations needed to obtain
an accurate Krylov approximation. Moreover, in view of what was observed in Section~\ref{sec:symbolic_application}, one can easily see that the Conjugate Gradient Method for the solution of ${\mathcal{B}_\eta} {A_\lambda^{(\eta)}}=b$, can be easily implemented in Pauli coordinates using symbolic application of the operator ${\mathcal{B}_{\eta}}$. Thus the Pauli support explored by the method is generated by repeated
commutators with \(H\){, and by the result in Lemma~\ref{lem:locality_bound}, increases, under suitable hypothesis, at most polynomially together with the dimension of the adopted Krylov subspace. If convergence is not achieved quickly enough, this causes the number of dictionary terms in the expansion to grow beyond the available memory. To this end, it is necessary to introduce a Pauli expansion approximation procedure to keep memory consumption under control.} 

\subsection{Truncation as Pauli sparsification}

Without sparsification, the Pauli support may eventually become too large.
We therefore introduce a truncation operator \(T_m\), which keeps the \(m\)
largest Pauli coefficients in magnitude:
$    T_m\!\left(\sum_P a_P P\right)
    =
    \sum_{P\in S_m} a_P P$,
where \(S_m\) indexes the \(m\) largest values of \(|a_P|\).

A truncated Pauli--CG iteration takes the form
$
    A_{k+1}=T_m(A_k+\alpha_k p_k)$.
Similarly, the search direction may be sparsified:
$
    p_{k+1}=T_m(r_{k+1}+\beta_kp_k)$.
This truncation is essential for scalability, but it destroys the exact
three-term CG recurrence. In particular, the recursively updated residual
$
    r_{k+1}^{\rm rec}=r_k-\alpha_k {\mathcal{B}_\eta} p_k
$
need no longer coincide with the true residual
$
    r_{k+1}^{\rm true}=b-{\mathcal{B}_\eta} A_{k+1} 
$.
For this reason, after every truncation step we recompute the residual
exactly:
$
    r_{k+1}:=b-{\mathcal{B}_\eta} A_{k+1}
$. 
This exact residual recomputation is crucial. Loss of conjugacy merely slows
the Krylov method down, whereas residual drift can make the algorithm falsely
report convergence.

The truncated Pauli--CG method should therefore be interpreted primarily as a
support-discovery procedure. It identifies {progressively} a candidate sparse Pauli support
$
    S=\operatorname{supp}(A_k)\cup\operatorname{supp}(p_k)\cup
      \operatorname{supp}(r_k)$,
which is then refined by a Galerkin projection as explained in the next Section~\ref{sec:galerkin_refit} {, while we discuss the error introduced by this procedure in Section~\ref{sec:residual_certificate}}.

{
}
\begin{remark}
Following Remark~\ref{rem:qtt_stuff}, this truncated Krylov construction closely mirrors iterative solvers for high-dimensional systems in the Tensor Train format, such as TT-GMRES~\cite{MR3043559}. In those frameworks, operator-vector products within the Krylov loop trigger rapid rank growth, necessitating a tensor rounding step at each iteration to ensure scalability. This rank reduction breaks the exact recurrences or orthogonality of the underlying algorithm, completely analogous to the behavior of our Pauli sparsification operator $T_m$.
\end{remark}

\subsection{Galerkin refit on the discovered support}\label{sec:galerkin_refit}

Once a support \(S\subset\mathcal P_n\) has been identified, we compute the
best approximation on this support by solving the restricted Galerkin problem producing a solution $A_S$ s.t.
$
    A_S=\sum_{P\in S} c_P P$. 
The coefficients \(c_P\) are chosen so that
$
    \Pi_S {\mathcal{B}_\eta} \Pi_S c = \Pi_S b,
$
where \(\Pi_S\) denotes the orthogonal projection onto
\(\operatorname{span}(S)\). Equivalently,
$
    \langle P,{\mathcal{B}_\eta} A_S\rangle_{\rm HS}
    =
    \langle P,b\rangle_{\rm HS}$, $P\in S$.
Thus the residual $\mathcal{R}_{\mathcal{S}}=b-{\mathcal{B}_\eta} A_S$ satisfies the Galerkin orthogonality condition
\begin{equation}
\langle P,\mathcal{R}_{\mathcal{S}}\rangle_{\rm HS}=0,
\qquad P\in S,
\label{eq:galerkin_orthogonality}
\end{equation}
i.e., the Galerkin solution is the minimizer of the quadratic energy
$
    \Phi_\eta(A)
    =
    {\tfrac12\langle A,\mathcal{B}_\eta A\rangle_{\rm HS}
    -
    \langle b,A\rangle_{\rm HS}+C}
$
over \(A\in\operatorname{span}(S)\). Hence
$\displaystyle 
    A_S
    =
    \operatorname*{arg\,min}_{A\in\operatorname{span}(S)}
    \Phi_\eta(A)$.
It is therefore the \({\mathcal{B}_\eta}\)-orthogonal projection of the exact regularised
AGP \({A_\lambda^{(\eta)}}\) onto the chosen Pauli subspace.

\subsection{A posteriori residual certificate}\label{sec:residual_certificate}

The error of any sparse approximation is controlled \emph{a posteriori} by Proposition~\ref{prop:residual_certificate}. Recall that, for the restricted solution \(A_{\cS}^{(\eta)}\), its proof established the exact error representation
$
    E_{\cS}:=A_{\cS}^{(\eta)}-A_\lambda^{(\eta)}=-\mathcal{B}_\eta^{-1}\cR_{\cS}
$,
together with the bound \(\norm{E_{\cS}}_{\rm HS}\le\norm{\cR_{\cS}}_{\rm HS}/\eta\), obtained from the spectral floor \(\mathcal{B}_\eta\succeq\eta I\) of Lemma~\ref{lem:spectral_floor} and Parseval's identity in the Pauli basis.

The derivation uses nothing about \(A_{\cS}^{(\eta)}\) beyond linearity, so it applies verbatim to \emph{any} finitely supported Pauli operator \(A'\), whether or not it minimises the restricted problem. Writing its exact residual as
$
    \cR':=b-\mathcal{B}_\eta A'=-\ii\,\cL_H(G)-\mathcal{B}_\eta A'$,
and using \(\mathcal{B}_\eta A_\lambda^{(\eta)}=b\), we obtain
$
    \mathcal{B}_\eta\bigl(A'-A_\lambda^{(\eta)}\bigr)=\mathcal{B}_\eta A'-b=-\cR'$
implying $  A'-A_\lambda^{(\eta)}=-\mathcal{B}_\eta^{-1}\cR'$,
which is exactly the representation in the proof of Proposition~\ref{prop:residual_certificate} with \(A_{\cS}^{(\eta)}\) replaced by \(A'\) and \(\cR_{\cS}\) by \(\cR'\). The spectral floor of Lemma~\ref{lem:spectral_floor} then yields the computable certificate
\begin{equation} 
    \norm{A'-A_\lambda^{(\eta)}}_{\rm HS}
    \le
    \frac{\norm{\cR'}_{\rm HS}}{\eta}
    =
    \frac{1}{\eta}\Bigl(\sum_{P\in\cP_n}\bigl|\langle P,\cR'\rangle_{\rm HS}\bigr|^2\Bigr)^{1/2},
    \label{eq:agp_certificate}
\end{equation}
the Parseval expansion being justified exactly as in Proposition~\ref{prop:residual_certificate}. The sole difference is that, for a general \(A'\), the Galerkin orthogonality \eqref{eq:galerkin_ortogonality} need not hold, so the sum ranges over all of \(\operatorname{supp}(\cR')\) instead of the finite set \(\cR(\cS)\setminus\cS\) of Lemma~\ref{lem:residual_support}. The bound is independent of how \(A'\) was produced and survives arbitrary truncation, loss of conjugacy, or heuristic support updates, provided the residual \(\cR'=b-\mathcal{B}_\eta A'\) is recomputed exactly.

When \(A'\) is the Galerkin refit \(A_{\cS}^{(\eta)}\), the orthogonality \eqref{eq:galerkin_ortogonality} restores \(\sigma_P=0\) on \(\cS\) and, by Lemma~\ref{lem:residual_support}, \(\sigma_P=0\) outside \(\cR(\cS)\); the certificate \eqref{eq:agp_certificate} then collapses to Proposition~\ref{prop:residual_certificate}(ii),
\[
    \norm{A_{\cS}^{(\eta)}-A_\lambda^{(\eta)}}_{\rm HS}
    \le
    \frac{1}{\eta}\Bigl(\sum_{P\in\cR(\cS)\setminus\cS}\sigma_P^2\Bigr)^{1/2}
    =
    \frac{\norm{b-\mathcal{B}_\eta A_{\cS}^{(\eta)}}_{\rm HS}}{\eta}.
\]
Thus the residual norm gives a computable upper bound on the error of the sparse AGP approximation: if the certificate is small, the discovered Pauli support is sufficient; if it remains large even after adding more Pauli strings, the AGP is not well approximated within the imposed sparsity budget.

Algorithm~\ref{alg:truncated-pauli-cg} describes the implementation used in the numerical experiments.
The method is a Pauli-coordinate version of the regularised AGP equation.  A fixed number of
matrix-free conjugate-gradient iterations is first used to discover a candidate Pauli support.
The support is then truncated by coefficient magnitude, $T_{m,w_{\max}}
\left(
\sum_{P\in\mathcal P_n} a_P P
\right)
=
\sum_{P\in S_{m,w_{\max}}} a_P P$,
where $S_{m,w_{\max}}$ contains the $m$ largest coefficients $|a_P|$ among the Pauli strings
satisfying ${\rm wt}(P)\le w_{\max}$; refitted by solving the restricted
Galerkin problem, and certified by recomputing the exact symbolic residual

\begin{algorithm}[htb!]
\caption{Truncated Pauli--CG iteration for the regularised AGP}
\label{alg:truncated-pauli-cg}
\begin{algorithmic}[1]
\Require Hamiltonian $H(\lambda)$, derivative $G=\partial_\lambda H(\lambda)=H_C-H_B$,
regularization $\eta>0$, number of CG iterations $K$, optional truncation budget $m$,
optional maximum Pauli weight $w_{\max}$.
\Ensure A Pauli--CG approximation $A^{(\eta), K}_\lambda$ to the regularised AGP $A^{(\eta)}_\lambda$.



\State Initialize
\[
A_0=0,
\qquad
r_0=b,
\qquad
p_0=r_0 .
\]
\For{$k=0,\ldots,K-1$}
    \State $w_k={\mathcal{B}_\eta} p_k.$ \Comment{Apply the matrix-free Pauli operator}
    \State $\displaystyle \alpha_k
    =
    \frac{\langle r_k,r_k\rangle_{\rm HS}}
         {\langle p_k,w_k\rangle_{\rm HS}}$ .
    \State $    \widetilde A_{k+1}
    =
    A_k+\alpha_k p_k .
    $ \Comment{Take the {candidate} CG step}
    \If{truncation is active}
        \State $A_{k+1}
        =
        T_{m,w_{\max}}
        \left(\widetilde A_{k+1}\right).$ \Comment{Truncate the Pauli expansion}
    \Else
        \State $A_{k+1}=\widetilde A_{k+1}$.
    \EndIf
    \State 
    $
    r_{k+1}
    =
    b-{\mathcal{B}_\eta} A_{k+1}.
    $ \Comment{Recompute the residual exactly}
    \State 
    $
    \displaystyle \beta_k
    =
    \frac{\langle r_{k+1},r_{k+1}\rangle_{\rm HS}}
         {\langle r_k,r_k\rangle_{\rm HS}} .
    $
    \State 
    $
    \widetilde p_{k+1}
    =
    r_{k+1}+\beta_k p_k .
    $ \Comment{Form the new CG search direction}
    \If{truncation is active}
        \State 
        $p_{k+1}
        =
        T_{m,w_{\max}}
        \left(\widetilde p_{k+1}\right).$ \Comment{Truncate the search direction:}
    \Else
        \State $p_{k+1}=\widetilde p_{k+1}$.
    \EndIf
\EndFor

\State \Return $A^{(\eta), K}_\lambda$.

\end{algorithmic}
\end{algorithm}
The residual is recomputed as
$r_{k+1}=b-{\mathcal{B}_\eta} A_{k+1}$ rather than updated recursively as
$r_{k+1}^{\rm rec}=r_k-\alpha_k {\mathcal{B}_\eta} p_k$.
This distinction matters once truncation is applied, because truncation destroys the exact
three-term CG recurrence and the recursively updated residual may no longer coincide with
the true residual.

We emphasize that, after truncation, Algorithm \ref{alg:truncated-pauli-cg} does no longer coincides with an exact conjugate-gradient
method but rather with an inexact version, see \cite{MR2058070,MR1740397}.  Therefore the classical CG monotonicity and finite-step convergence guarantees do
not directly apply to the truncated iterates.  The role of the truncated iteration is instead
support discovery.  Reliability is recovered a posteriori by recomputing the true residual
$
    r_{k+1}^{\rm true}
$
and by applying the residual certificate \eqref{eq:agp_certificate} after the
Galerkin refit.

\subsection{{Magnitude truncation, Galerkin refit and Hermitianity}}
Every Hermitian operator admits a real Pauli
expansion
$   A=\sum_{P\in\mathcal P_n} a_P P$, $a_P=\langle P,A\rangle_{\rm HS}\in\mathbb R$.
After \(K\) conjugate-gradient iterations in Pauli coordinates, we obtain an approximation
$
    A_K=\sum_{P\in\mathcal P_n} a_P^{(K)}P$.
Its Pauli support typically grows with \(K\).  As explained in the previous section, to keep the counterdiabatic generator
implementable, we combine magnitude truncation with a Galerkin refit on the retained support. 
Given a support budget \(m\), and optionally a maximum Pauli weight \(w_{\max}\), define
\[
    \mathcal S
    =
    \left\{
    P\in\mathcal P_n:
    |a_P^{(K)}| \text{ is among the } m \text{ largest retained coefficients and }
    \operatorname{wt}(P)\leq w_{\max}
    \right\}.
\]
The truncated operator is
\begin{equation}
    T_{m,w_{\max}}(A_K)
    =
    \sum_{P\in\mathcal S} a_P^{(K)}P .
    \label{eq:pauli-truncation}
\end{equation}
    
In practice, this requires only a partial sort of the current sparse Pauli dictionary by
\(|a_P^{(K)}|\).  Coefficients below numerical precision are discarded, and the cost scales with
the number of stored Pauli strings rather than with the full dimension \(4^n\).

Direct truncation does not, in general, give the best approximation on the retained support,
because removing Pauli directions changes the optimal coefficients on the remaining ones.
We therefore recompute the coefficients on $V_{\mathcal S}:=\operatorname{span}\{P:\,P\in\mathcal S\}$ by solving classically the Galerkin projection of the regularised AGP equation
$    A_{\mathcal S}^{(\eta)}
    =
    \sum_{P\in\mathcal S} c_P P$, and $M_{\mathcal S}c=f_{\mathcal S}$,
where $(M_{\mathcal S})_{PQ}
    =
    \langle P,\mathcal B_\eta(Q)\rangle_{\rm HS}$,
    $(f_{\mathcal S})_P
    =
    \langle P,b\rangle_{\rm HS}$, and $P,Q\in\mathcal S$.
The reduced matrix is assembled by applying \(\mathcal B_\eta\) symbolically to each retained
Pauli string \(Q\in\mathcal S\), collecting only the components in \(\mathcal S\).  Since
\(|\mathcal S|=m\) is small compared with \(4^n\), the reduced system is then solved directly.
Thus truncation selects the implementable Pauli support, while the Galerkin refit computes the
best regularised-AGP coefficients on that support.

The Pauli-coordinate representation also makes Hermiticity transparent.  Since \(H\) and
\(G\) are Hermitian, the right-hand side $b=-\ii[H,G]$
is Hermitian.  Moreover, \(\mathcal B_\eta=\mathcal L_H^2+\eta I\) maps Hermitian operators
to Hermitian operators and is self-adjoint positive definite on the Hilbert--Schmidt space.
Therefore the exact solution \(A_\lambda^{(\eta)}\) is Hermitian and has real Pauli
coefficients.

The same property is preserved by the numerical construction.  The CG iterates are represented
with real coefficients in the Hermitian Pauli basis.  The truncation
\eqref{eq:pauli-truncation} is simply an orthogonal projection onto a coordinate subspace of
this real Hermitian basis, and therefore removes Pauli terms without introducing any
anti-Hermitian component.  Finally, the Galerkin matrix \(M_{\mathcal S}\) is a real symmetric
positive-definite matrix, and the Galerkin solution \(c=M_{\mathcal S}^{-1}f_{\mathcal S}\) is
real.  Consequently, $A_{\mathcal S}^{(\eta)}
    =
    \sum_{P\in\mathcal S} c_P P$ 
is Hermitian by construction.  No post-processing symmetrization is required, and the
counterdiabatic propagator $\exp(-\ii \alpha A_{\mathcal S}^{(\eta)})$ is unitary by construction.

\section{QAOA implementation}
\label{sec:qaoa_implementation}

This section specifies the circuit model used in the computational experiments.  We
allow the problem Hamiltonian to be a generic Hermitian Pauli Hamiltonian
    $H_C
    =
    \sum_{\mu\in\mathcal I_C} c_\mu P_\mu$, for $c_\mu\in\mathbb R$,
where each \(P_\mu\) is an \(n\)-qubit Pauli string, i.e., $P_\mu\in\{I,X,Y,Z\}^{\otimes n}$.
In the classical Ising or QUBO case, \(H_C\) contains only \(Z\)-type strings
and is therefore diagonal in the computational basis, but this is not fully necessary for the following developments. The mixer convention used throughout the experiments is
    $H_B=-\sum_{i=1}^n X_i$,
and   $U_B(\beta)=\exp(-\ii\beta H_B)
    =
    \exp\!\left(+\ii\beta\sum_{i=1}^n X_i\right)$,
with initial state $|+\rangle^{\otimes n}$. Thus, both the cost layer and the mixer layer are treated as Hamiltonian
evolution operators. The interpolation used to define the AGP is
$
    H(\lambda)
    =
    (1-\lambda)H_B+\lambda H_C$, with $G=\partial_\lambda H(\lambda)=H_C-H_B$.
For depth $p$, the implemented Linear Ramp-QAOA (LR-QAOA) state is
$    |\psi_p^{\rm LR}\rangle
    =
    \prod_{i=0}^{p-1}
    U_B(\beta_i)U_C(\gamma_i)
    |+\rangle^{\otimes n}$,
where the product is ordered by increasing layer index as applied to the
state.  Equivalently, the \(i\)-th layer first applies the cost evolution
\(U_C(\gamma_i)\) and then the mixer evolution \(U_B(\beta_i)\).  The CD
extension (denoted by LR-CD-QAOA in the following) inserts an AGP-generated unitary after the cost and mixer operations
of the same ramp layer
$    |\psi_p^{\rm CD}\rangle
    =
    \prod_{i=0}^{p-1}
    \exp(-\ii\alpha_i A_i)
    U_B(\beta_i)U_C(\gamma_i)
    |+\rangle^{\otimes n}$.
Here \(A_i\) denotes the regularised adiabatic gauge potential associated with
the interpolation point \(\lambda_i\). 

Moreover, at depth \(p\), the used linear-ramp angles are
    $\beta_i=\left(1-\frac{i}{p}\right)\Delta_\beta$,
    $\gamma_i=\frac{i+1}{p}\Delta_\gamma$, for $i=0,\ldots,p-1$.
The CD layer uses the endpoint schedule
$\lambda_i=\frac{i+1}{p}$, $\alpha_i=\Delta_{\rm CD}\lambda_i(1-\lambda_i)$.
Therefore, for \(p=1\), the only interpolation point is \(\lambda_0=1\), and
hence \(\alpha_0=0\).  With this particular endpoint schedule, the CD layer is
the identity at depth one, so CD-LR-QAOA coincides exactly with LR-QAOA for
the same values of \((\Delta_\beta,\Delta_\gamma)\).  This identity is a
consequence of the chosen endpoint discretization and is not a generic
property of all possible CD-QAOA schedules. In the experiments, \(A_i\) may be the
exact spectral regularised AGP, a fixed-\(K\) conjugate-gradient iterate for
the regularised AGP equation, or a Pauli-compressed approximation of that
iterate.  It is important to note that we use a variational version of LR-QAOA, see \cite{Dehn2025}, which uses only two variational parameters
$
    (\Delta_\beta,\Delta_\gamma),
$
i.e., the minimized function is
  $  \min_{\Delta_\beta,\Delta_\gamma}
    \;
    \Phi_{\rm LR}(\Delta_\beta,\Delta_\gamma)
    :=
    \langle
        \psi_p^{\rm LR}(\Delta_\beta,\Delta_\gamma)
        |
        H_C
        |
        \psi_p^{\rm LR}(\Delta_\beta,\Delta_\gamma)
    \rangle$.
Anaolougoiusly, in the case of CD-LR-QAOA, we propose the three scalar variational version where the parameters are optimized in the
CD ramp, $(\Delta_\beta,\Delta_\gamma,\Delta_{\rm CD})$. The corresponding variational problem is
$    \min_{\Delta_\beta,\Delta_\gamma,\Delta_{\rm CD}}
    \;
    \Phi_{\rm CD}(\Delta_\beta,\Delta_\gamma,\Delta_{\rm CD})
    :=
    \langle
        \psi_p^{\rm CD}(\Delta_\beta,\Delta_\gamma,\Delta_{\rm CD})
        |
        H_C
        |
        \psi_p^{\rm CD}(\Delta_\beta,\Delta_\gamma,\Delta_{\rm CD})
    \rangle$,
In both cases we use the box constraints 
$\Delta_\beta\in [-\pi,\pi]$, $\Delta_\gamma\in [-\pi,\pi]$, and $\Delta_{\rm CD}\in[-\pi,\pi]$.

\section{Numerical experiments}
\label{sec:numerics}

In this section we validate both the algorithmic and the theoretical framework we have discussed. Specifically, in Section~\ref{sec:theory_validation} we validate the theoretical framework with respect to the effect of the regularization parameter $\eta$, the energy resolution mechanism, and the usage of the (truncated) CG method for matrices and vectors expressed in Pauli's basis on the Ferromagnetic Chain (FC) Ising problem. In Section~\ref{subsec:asc-maxcut-generation} we discuss instances obtained by perturbing FC with other components. All numerical experiments were implemented in \texttt{Python 3.10.20} using \texttt{Qiskit 2.4.1} with \texttt{Aer 0.17.2} and executed on a MacBook Pro
with Apple Silicon processor \({\rm Apple \; M3  \; Pro}\), \({\rm 18 \; GB \; RAM}\) memory, running
macOS \({\rm Thaoe \;  26.5.1 }\).  The code used to reproduce the experiments is publicly
available on the GitHub repository \href{https://github.com/StefanoCipolla/Inexact_CG_CD_LR_QAOA}{StefanoCipolla/Inexact\_CG\_CD\_LR\_QAOA}.

\subsection{Theory Validation}\label{sec:theory_validation}
The first set of experiments aims to validate the developed theory. Accordingly, all experiments in the first part of this section use exact, dense statevector simulation in complex double precision: circuits are executed on \texttt{Qiskit Aer's} statevector simulator, and all reported quantities are evaluated as exact expectation values through the \texttt{EstimatorV2} primitive at zero target precision (i.e., without sampling/shot noise). Throughout this section we consider the model problem: 

\begin{equation}
    H(\lambda)= \lambda \underbrace{(-J\sum_i Z_iZ_{i+1})}_{H_C} + (1-\lambda) \underbrace{( -\sum_iX_i)}_{H_B},
    \label{eq:tfim_hamiltonian}
\end{equation}
see \cite[Sec. B]{wbbs-s8fs}.
For each pair \((n,p)\), the LR-QAOA ramp parameters are optimized first, and
the same LR solution is then used to initialize the CD-LR optimizations.  The
CD-LR ansatz optimizes the additional counterdiabatic amplitude
\(\Delta_{\rm CD}\), while keeping the same endpoint schedule
$\lambda_i=\frac{i+1}{p}$, with $\alpha_i=\Delta_{\rm CD}\lambda_i(1-\lambda_i)$.
Since CD-LR-QAOA reduces exactly to LR-QAOA when
\(\Delta_{\rm CD}=0\), this initialization embeds the LR solution inside the
larger CD variational family.  The CD optimization is then performed over $(\Delta_\beta,\Delta_\gamma,\Delta_{\rm CD})$,
so that any improvement over LR can be attributed to the additional counterdiabatic degree of freedom rather than to a different initialization of the ramp parameters.

\subsubsection{Regularised AGP Theory validation}

\begin{figure}[t]
    \centering
    \includegraphics[width=0.45\linewidth]{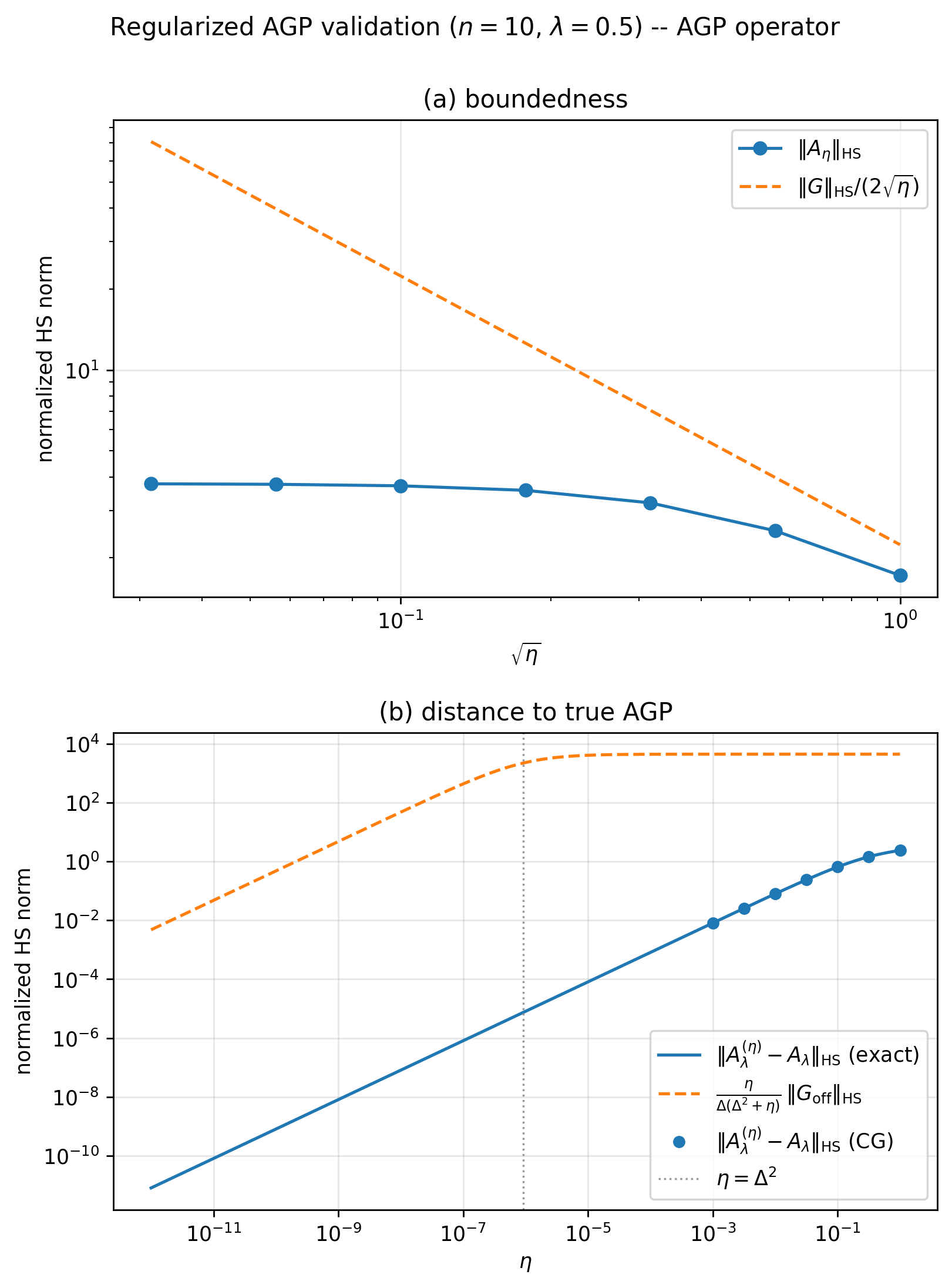}
    \includegraphics[width=0.45\linewidth]{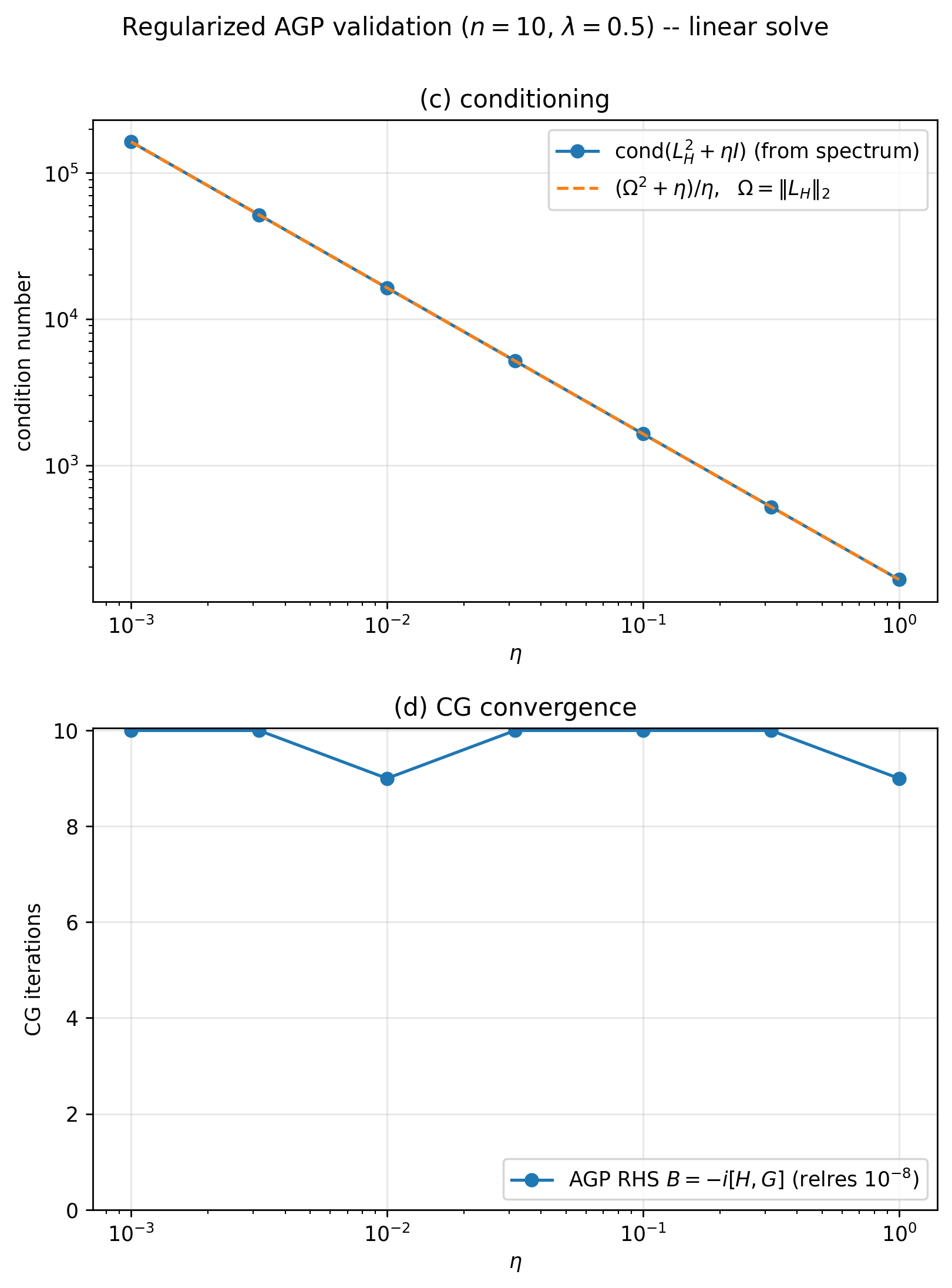}
    \caption{Matrix-free validation of the regularised AGP theory for the
    ordered-Ising interpolation \eqref{eq:tfim_hamiltonian} with \(n=10\).  All quantities are computed in
    the computational basis without diagonalizing \(H(\lambda)\).  Panel (a)
    verifies the boundedness of the regularised AGP as a function of the
    regularization scale \(\sqrt{\eta}\).  Panel (b) reports the deviation from
    the unregularised AGP, showing that the regularised solution approaches the
    exact AGP as \(\eta\) decreases.  Panel (c) compares the condition number of
    the regularised squared Liouvillian with the theoretical upper bound.
    Panel (d) reports the number of matrix-free conjugate-gradient iterations
    required to solve the regularised AGP equation to relative residual
    \(10^{-8}\).}
    \label{fig:agp_theory_validation}
\end{figure}

Figure~\ref{fig:agp_theory_validation} validates the main theoretical
properties of the regularised AGP on the ordered-Ising interpolation \eqref{eq:tfim_hamiltonian}.  The
regularised AGP is computed by solving $\left(\mathcal L_H^2+\eta I\right)A_\eta
    =
    -i\mathcal L_H(G)$, and $\mathcal L_H(X)=[H,X]$,
using both, the spectral formula obtained after
diagonalizing \(H(\lambda)\) and a matrix-free conjugate-gradient method.

Panel (a) confirms the boundedness effect of the regularization.  The
normalized Hilbert--Schmidt norm of \(A_\eta\) remains well below the
theoretical bound in Proposition \ref{prop:boundedness}.  This behaviour is
consistent with the regularised approach 
which suppresses small-frequency components and prevents the AGP from becoming
singular near small gaps.

Panel (b) illustrates the regularization bias.  As \(\eta\) is reduced, the
regularised solution approaches the unregularised AGP, while larger values of
\(\eta\) increasingly attenuate the AGP components.  This confirms the expected
trade-off: small \(\eta\) gives a more faithful approximation of the exact AGP,
whereas larger \(\eta\) produces a smoother and more strongly regularised
generator.

Panel (c) validates the conditioning estimate for the regularised squared
Liouvillian.  The condition number of
$
    \mathcal L_H^2+\eta I
$
scales inversely with \(\eta\), and closely follows the theoretical upper bound.
This confirms that the regularization shifts the spectrum away from zero and
turns the AGP equation into a well-posed positive-definite linear system.

Finally, Panel (d) reports the corresponding matrix-free CG iteration counts.
Despite the growth of the worst-case condition number for small \(\eta\), the
number of iterations remains modest for the AGP right-hand side
$
    -i[H,G].
$
This indicates, somehow, that for this particular Hamiltonian, the solution of the AGP equation lies in a small-dimensional Krylov space.

\subsubsection{Energy-resolution mechanism}

The Hamiltonian \eqref{eq:tfim_hamiltonian} naturally realises the spectral structure targeted by the regularised AGP. As shown in Figure \ref{fig:tfim_gap_scaling}, in finite size, the two lower states form a nearly degenerate doublet with splitting \(\Delta_{\rm split}=E_1-E_0\), while the gap to the first excitation outside the doublet, \(\Delta_{\rm out}=E_2-E_1\), remains \(O(1)\) away.  Figure~\ref{fig:tfim_gap_scaling} confirms this separation for periodic chains up to \(n=14\).  The regularised filter in Lemma~\ref{lem:filter-formula} is therefore useful in the window
\begin{equation}
    \Delta_{\rm split}\ll \sqrt\eta \ll \Delta_{\rm out}.
    \label{eq:tfim_eta_window}
\end{equation}
In this regime the CD generator need not resolve the exponentially small doublet splitting, but it can still suppress transitions out of the ordered manifold.
\begin{figure}[t]
    \centering
    \includegraphics[width=0.95\linewidth]{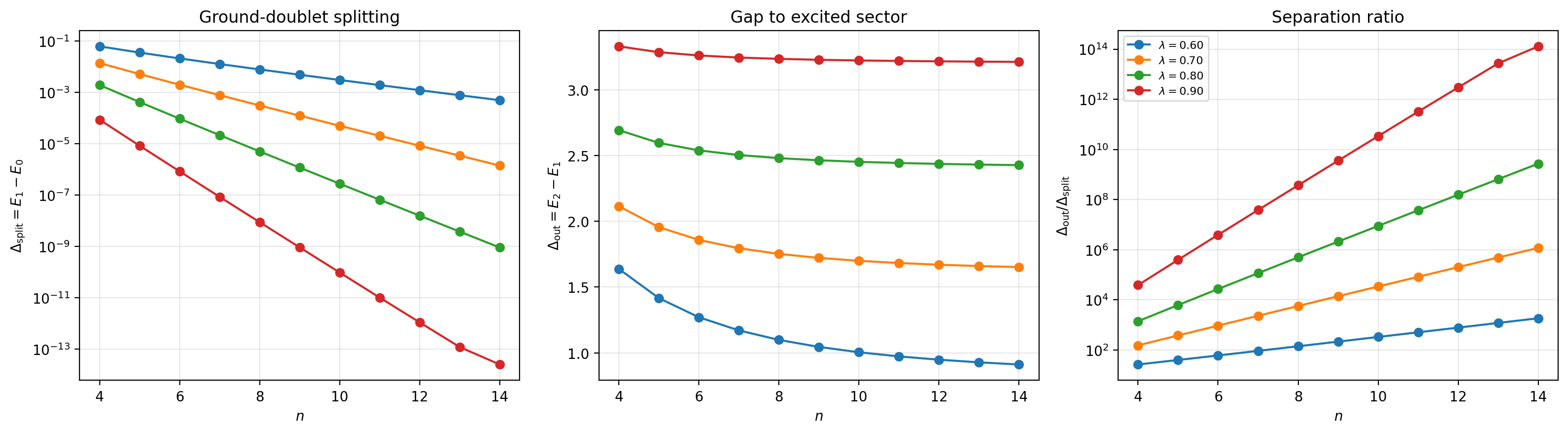}
    \caption{ Spectral structure of $H(\lambda)$.  The ground-doublet splitting decays rapidly with system size, while the outer excitation gap remains order one.  This creates a broad window in which \(\sqrt\eta\) can suppress doublet-resolving components without washing out the gap separating the ordered manifold from excited states.}
    \label{fig:tfim_gap_scaling}
\end{figure} 
For the Ferromagnetic Chain (FC) endpoint
\begin{equation} \label{eq:hchain}
H_C=-J\sum_{i=1}^n Z_iZ_{i+1},
\qquad J>0,    
\end{equation}
with periodic boundary conditions, every computational-basis state
\(|z\rangle=|z_1\cdots z_n\rangle\) is an eigenvector.  Since $Z_i|z\rangle=(-1)^{z_i}|z\rangle$,
we have
$
H_C|z\rangle
=
-J\sum_{i=1}^n (-1)^{z_i+z_{i+1}}|z\rangle
=
E(z)|z\rangle$, 
where \(z_{n+1}=z_1\).  Each term contributes \(-J\) when
\(z_i=z_{i+1}\) and \(+J\) when \(z_i\neq z_{i+1}\).  Hence the energy is
minimized when all neighbouring bits agree.  With periodic boundary
conditions this gives exactly two minimizing bitstrings, $
00\cdots0
\text{ and }
11\cdots1.
$
Both have energy $E_{\min}=-nJ$. Thus the endpoint has a two-dimensional classical ground-state manifold
spanned by $|0\cdots0\rangle$, and $|1\cdots1\rangle$.
Therefore the natural success
metric is the probability that the final state lies in this ordered manifold,
$
P_{\rm ord}(\psi)
=
|\langle 0\cdots0|\psi\rangle|^2
+
|\langle 1\cdots1|\psi\rangle|^2$,
hence, measuring the capability of the regularised AGP to resolve the exponentially small splitting inside the ordered doublet and suppress transitions out of the ordered manifold.

\begin{figure}[t]
    \centering
    \includegraphics[width=.49\linewidth]{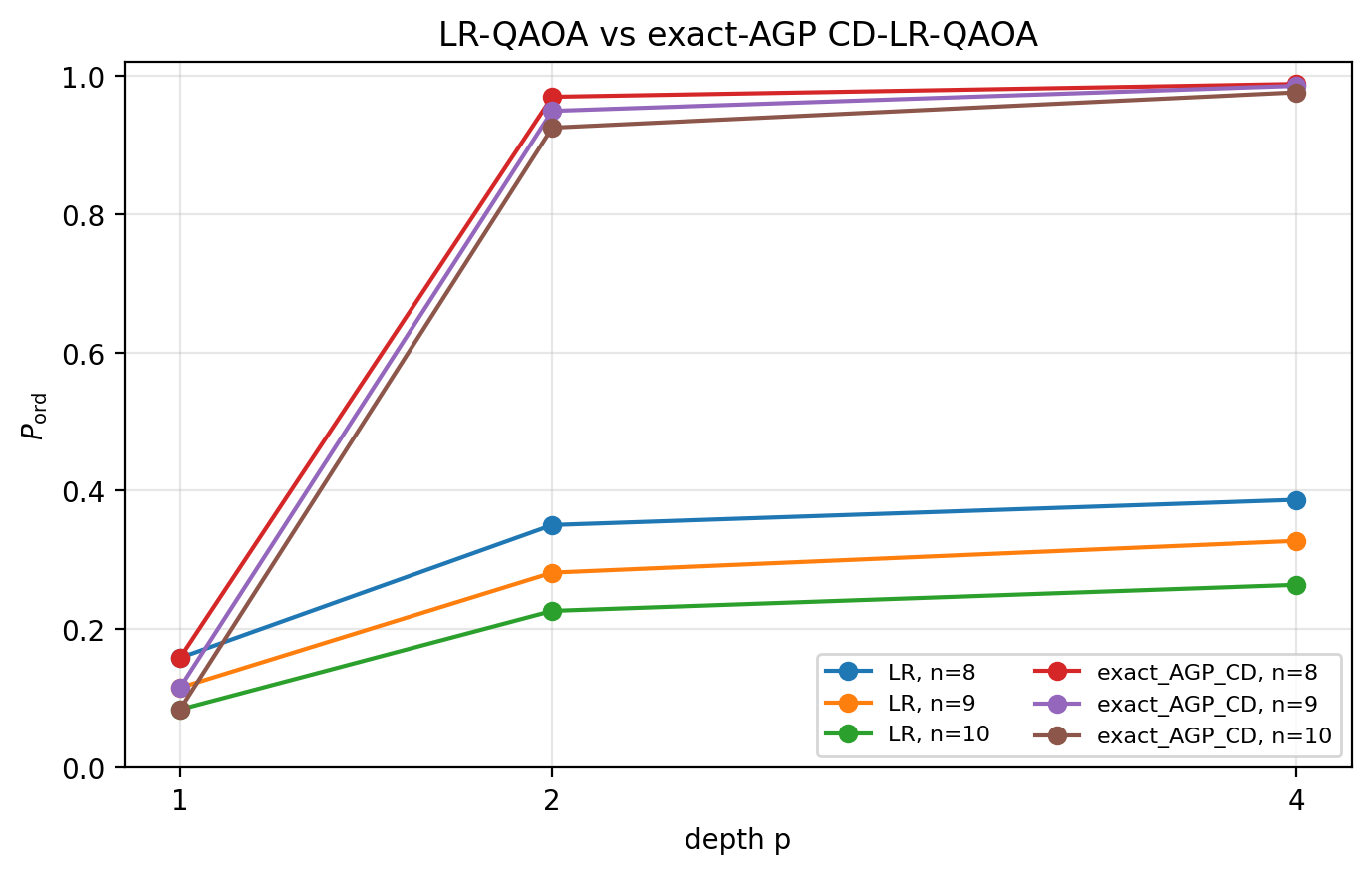}
    \includegraphics[width=.49\linewidth]{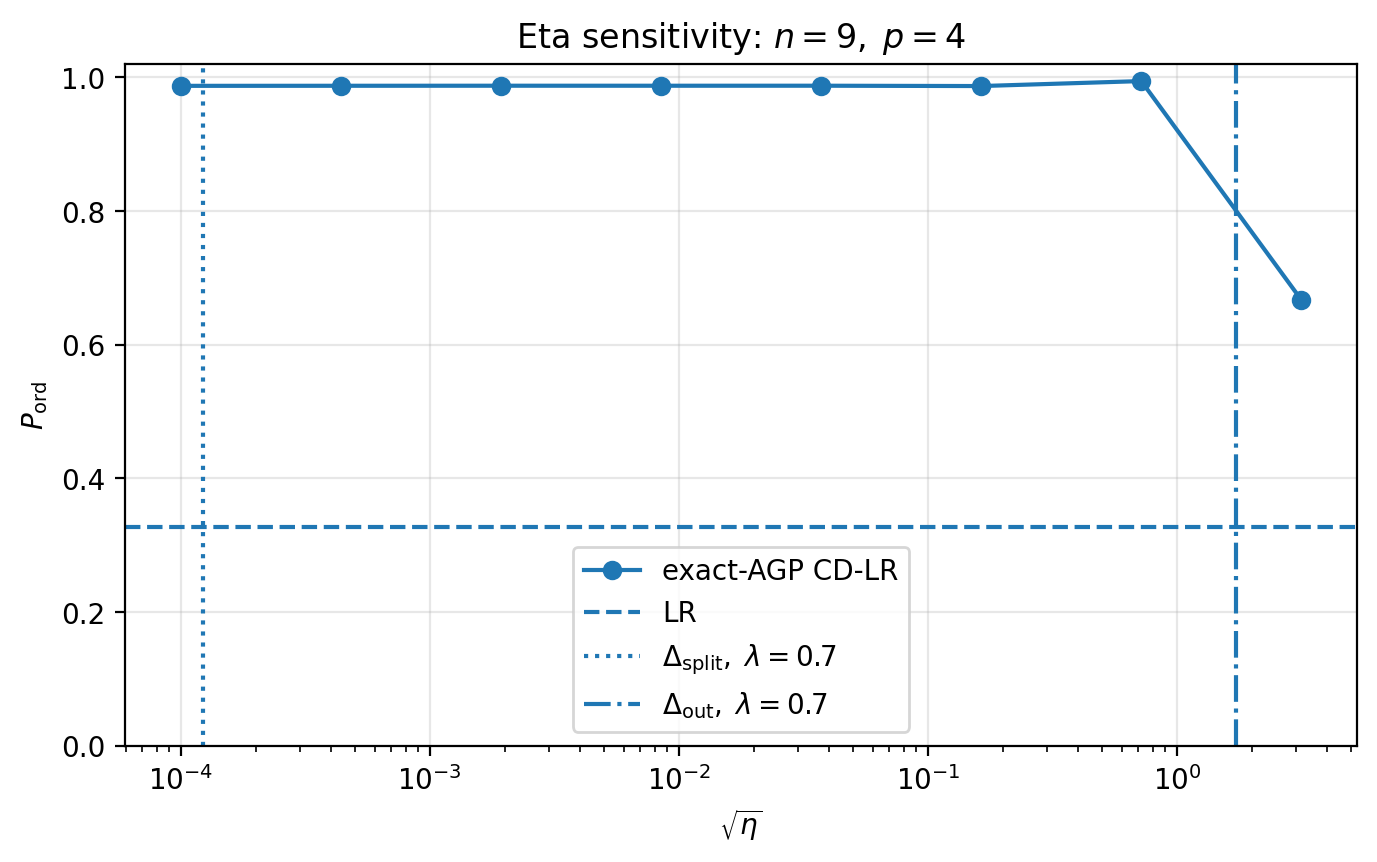}
    \caption{Theory-validation experiments for the ferromagnetic chain endpoint.
    Left: ordered-manifold probability \(P_{\rm ord}\) as a function of the
    circuit depth \(p\), comparing LR-QAOA with exact-AGP CD-LR-QAOA for
    \(n=8,9,10\). Right: sensitivity of exact-AGP CD-LR-QAOA to the
    regularization scale \(\sqrt{\eta}\) for \(n=9\) and \(p=2\). The dashed
    horizontal line gives the LR-QAOA baseline, while the vertical lines mark
    the reference spectral scales \(\Delta_{\rm split}\) and
    \(\Delta_{\rm out}\) of \(H(\lambda)\).}
    \label{fig:Pord_and_sensitivity}
\end{figure}

The left panel of Figure~\ref{fig:Pord_and_sensitivity} reports $P_{\rm ord}(\psi)$
obtained by LR-QAOA and exact-AGP CD-LR-QAOA. 
The results show a clear separation between the two ansatz classes.  Standard
LR-QAOA improves only mildly with the depth over the range \(p=1,2,4\), and its
performance deteriorates as \(n\) increases.  In contrast, the exact-AGP
CD-LR ansatz reaches a much larger ordered-manifold probability already at
\(p=2\), with \(P_{\rm ord}\) close to \(0.9\) for all tested system sizes.
At \(p=4\), the exact-AGP CD-LR curves approach \(P_{\rm ord}\simeq 1\),
indicating that the regularised counterdiabatic layer strongly suppresses
leakage out of the ordered subspace.  The coincidence between LR-QAOA and
CD-LR-QAOA at \(p=1\) is consistent with the endpoint CD schedule: for
\(p=1\), the only CD interpolation point is \(\lambda_0=1\), hence
$   \alpha_0
    =
    \Delta_{\rm CD}\lambda_0(1-\lambda_0)
    =
    0$, 
and the CD layer is the identity. The right panel of Figure~\ref{fig:Pord_and_sensitivity} studies the dependence
on the regularization scale \(\sqrt{\eta}\) for \(n=9\) and \(p=2\).  For each
value of \(\eta\), the regularised AGP is recomputed and the CD-LR parameters
are reoptimized, while the LR baseline is kept fixed.  The plot shows that the
CD improvement is robust over a broad interval of regularization values:
\(P_{\rm ord}\) remains close to \(0.9\) for small and moderate values of
\(\sqrt{\eta}\).  When \(\sqrt{\eta}\) becomes comparable with, or larger than,
the gap to the excited sector, the performance decreases.  This is consistent
with the spectral-filter interpretation of the regularised AGP, see Lemma \ref{lem:filter-formula}. Therefore transitions with energy differences \(|\omega|\ll\sqrt{\eta}\) are
suppressed, whereas transitions with \(|\omega|\gg\sqrt{\eta}\) are retained. In the regime 
$    \Delta_{\rm split}
    \ll
    \sqrt{\eta}
    \ll
    \Delta_{\rm out}$,
the AGP does not resolve the exponentially small splitting inside the
ordered doublet, but still captures transitions from the ordered manifold to the excited sector.


\subsubsection{Conjugate Gradient Theory Validation}

\begin{figure}[t]
    \centering
    \includegraphics[width=.515\linewidth]{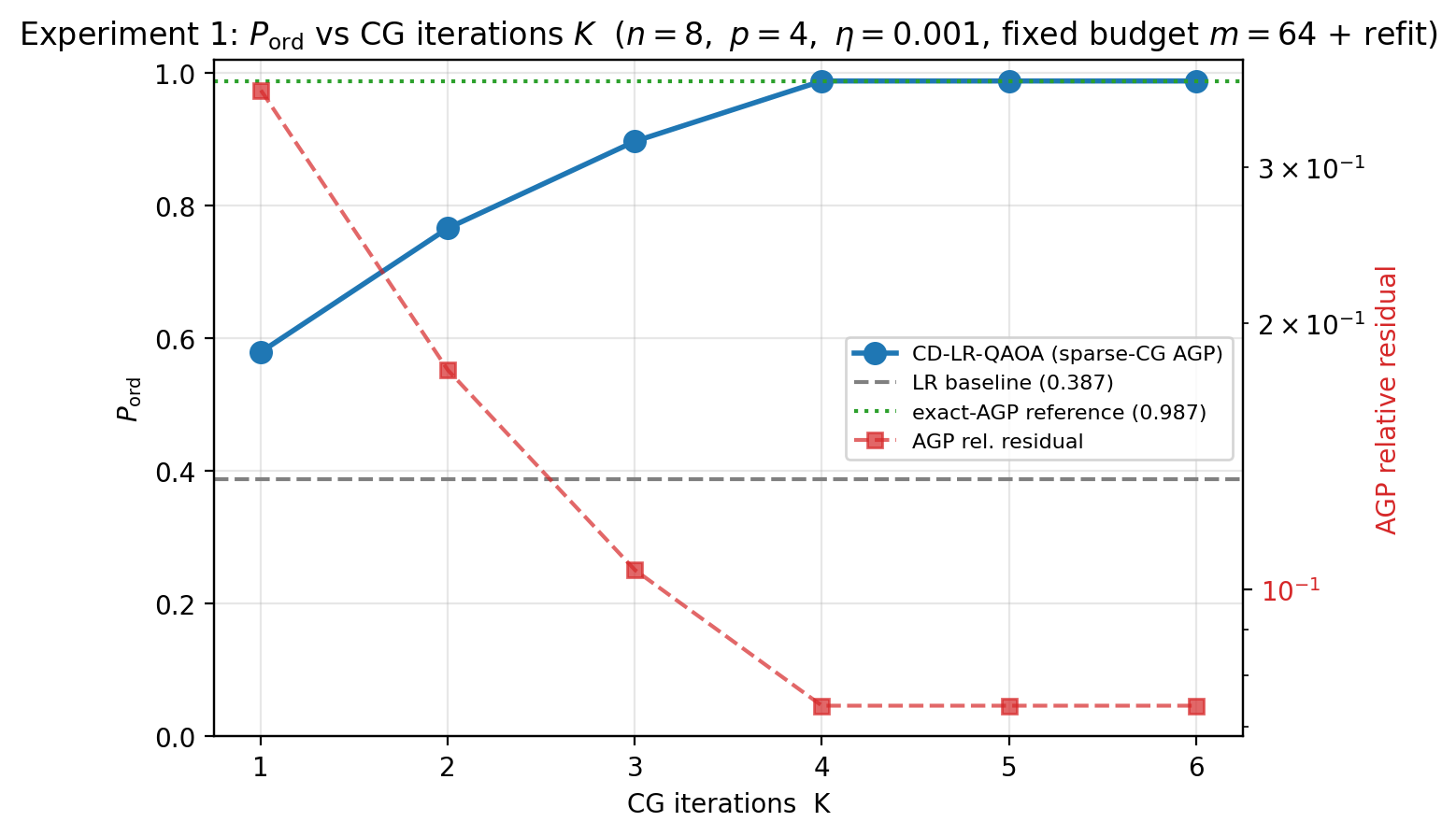}
    \includegraphics[width=.47\linewidth]{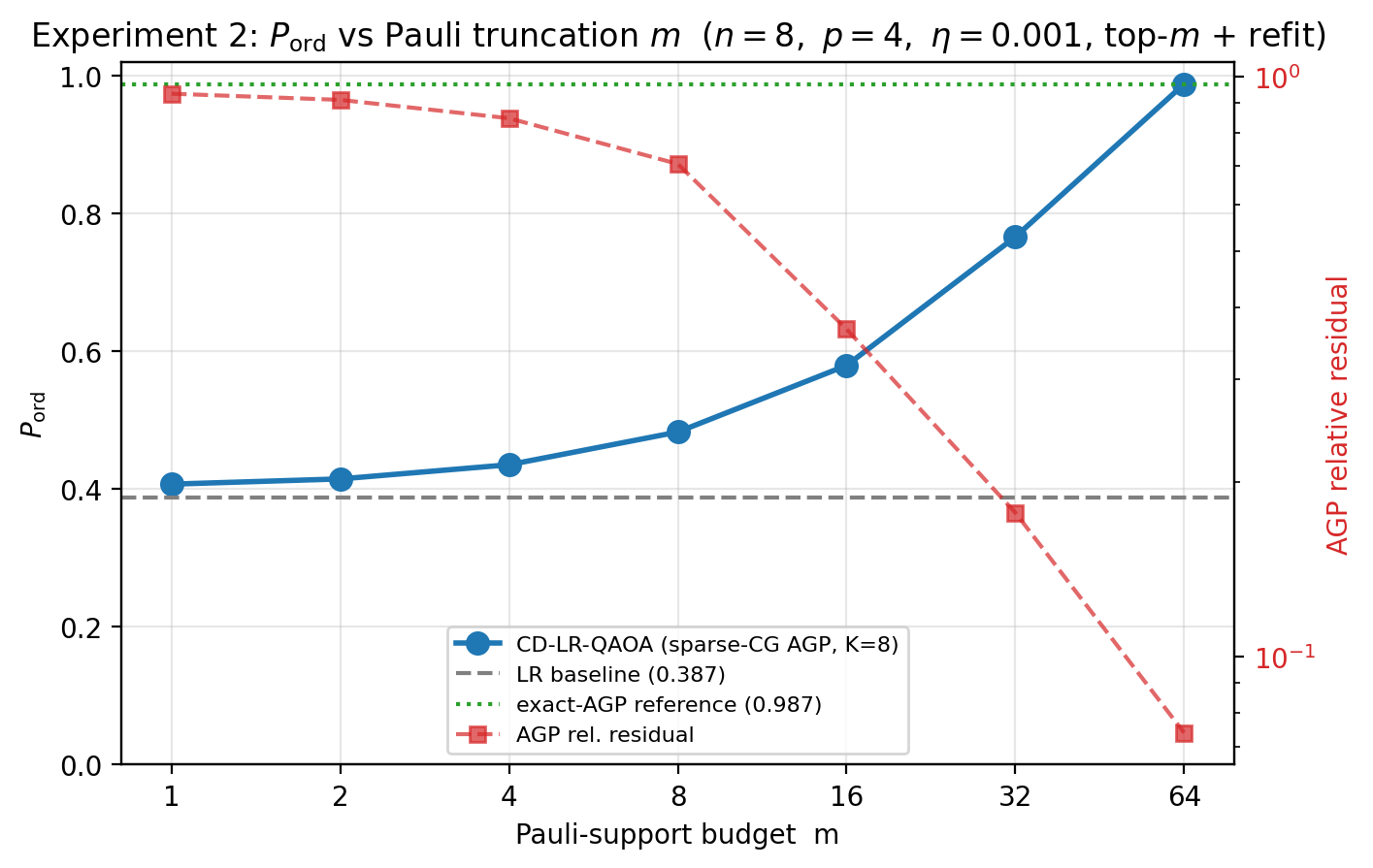}
    \caption{Convergence of the sparse conjugate-gradient (CG) construction of the regularized
    adiabatic gauge potential (AGP) on the ferromagnetic  chain ($n=8$ qubits, depth $p=4$,
    regularization $\eta=10^{-3}$, $\lambda=1/2$). Both panels report the ordered-state probability
    $P_{\mathrm{ord}}$ (left axis, blue, linear) and the AGP relative residual
    $\rho=\lVert(\mathcal{L}_H^2+\eta I)A-b\rVert_F/\lVert b\rVert_F$, $b=-i[H,G]$ (right axis,
    red, logarithmic), against the LR-QAOA baseline ($P_{\mathrm{ord}}=0.387$) and the exact
    dense-AGP reference ($P_{\mathrm{ord}}=0.987$). Left: dependence on the number of CG iterations
    $K$ at fixed support budget $m=64$. Right: dependence on the Pauli-support budget $m$ at fixed,
    converged $K=8$. In both cases the retained support is followed by an exact Galerkin refit.}
    \label{fig:K_m_dependencey}
\end{figure}

In Figure~\ref{fig:K_m_dependencey} we report the behaviour of the sparse CG construction of the
regularized AGP ($\lambda=1/2$) as a function of the number of conjugate-gradient iterations $K$ at a fixed
Pauli-support budget $m=64$ (left panel) and as a function of the Pauli-support budget $m$ at a
fixed, converged iteration count $K=8$ (right panel); in both cases the retained support is
followed by an exact Galerkin refit, and we track the ordered-state probability
$P_{\mathrm{ord}}$ together with the AGP relative residual. The computational results
presented in the figures highlight how the two parameters bound the approximation along
complementary axes: $K$ sets the depth of the iterative solve, while $m$ sets the sparsity of the
resulting operator. As either parameter is increased, the AGP reltive residual  decreases monotonically
and $P_{\mathrm{ord}}$ rises from the LR-QAOA baseline to the exact dense-AGP reference, so that
the exact AGP is recovered only when both the iteration count and the support budget are
sufficient. In the left panel a single iteration ($K=1$) already yields $P_{\mathrm{ord}}\approx
0.58$, and the exact-AGP reference is essentially attained by $K=4$, beyond which the AGP relative residual plateaus
at the floor imposed by the $m=64$ truncation rather than by the iteration; in the right panel
$P_{\mathrm{ord}}$ grows from $\approx0.41$ at $m=1$ to the reference value at $m=64$, with the
sharp improvement occurring once the dominant Pauli terms enter the support. The computational
results also highlight how $P_{\mathrm{ord}}$ saturates well before the AGP relative residual is driven to zero,
indicating that a modest iterative budget ($K\approx4$) together with a moderate support already
captures essentially all of the counterdiabatic benefit -- a favourable property in view of
scaling the construction to larger systems.

\subsection{Blended  instances}
\label{subsec:asc-maxcut-generation}

In order to validate the superiority of our proposal on \textit{difficult instances} for LR-QAOA, we consider in this section blended instances obtained by perturbing the ferromagnetic chain in  \eqref{eq:hchain} with other Hamiltonians.  The considered cost Hamiltonians are hence
$H_C(\epsilon)
    =
    \epsilon H_{\rm FC}
    +
    (1-\epsilon)H$, for $\epsilon\in[0,1]$.
The term \(H_{\rm FC}\) is the same as used in the
previous experiments, but with heavy/light weight couplings.  In this section, we perturb such \(H_{\rm FC}\)  with $H$ being i) a weighted MaxCut Hamiltonian generated from a random
graph or, ii) the Hamiltonian obtained form the QUBO instance \texttt{ms\_03\_050\_002} from \cite{Koch2026} (Market Split Collection).
For this experiment, each counterdiabatic generator was computed
using the regularised AGP equation
at the interpolation points \(\lambda_i=i/p\).  We used a regularisation parameter
$
    \eta=10^{-3},
    \hbox{ and }
    K=30
$
truncated Pauli-space conjugate-gradient iterations.  During the CG iteration, the
intermediate Pauli dictionaries were truncated to at most \(300\) Pauli strings.  After CG,
we retained the \(m=200\) largest Pauli coefficients, with no Pauli-weight cap, and then
performed an exact Galerkin refit on the selected support.  
It is important to note that the support size \(m\) should be interpreted only as a proxy for implementation cost.  

Figure~\ref{fig:maxcut} reports the behaviour of the approximation ratio, computed as in \cite[Eq. (6)]{montanezbarrera2025evaluatingperformancequantumprocessing}, of LR-QAOA and LR-CD-QAOA on two
twenty-qubit  perturbed MaxCut/MarketSplit instances with \(\epsilon\) close to one.  This regime is
particularly informative because the cost Hamiltonian is dominated by the FC component, and
therefore inherits a structured low-energy landscape with small spectral separations.  The
standard LR-QAOA ramp improves as the depth increases, but it is not able to reach a high
approximation ratio: even at \(p=8\), its performance remains clearly below the optimum.

The regularised AGP correction substantially changes this behaviour.  For both
\(\epsilon=0.90\) and \(\epsilon=0.95\), LR-CD-QAOA gives a uniformly better approximation
ratio at every tested depth. {The improvement is especially significant because the CD layer is
computed from the same interpolation Hamiltonian and is designed to suppress diabatic
transitions.  When \(\epsilon\) is close to one, the FC
structure makes these diabatic interactions particularly detrimental for LR-QAOA.  The AGP
correction mitigates this effect by adding a counterdiabatic generator that reduces leakage away
from the relevant low-energy sector.  As a result, the CD-enhanced ramp reaches ratios close to
the exact optimum, while the uncorrected LR ramp remains noticeably suboptimal.
}
\begin{figure}[t]
    \centering
    \includegraphics[width=.49\linewidth]{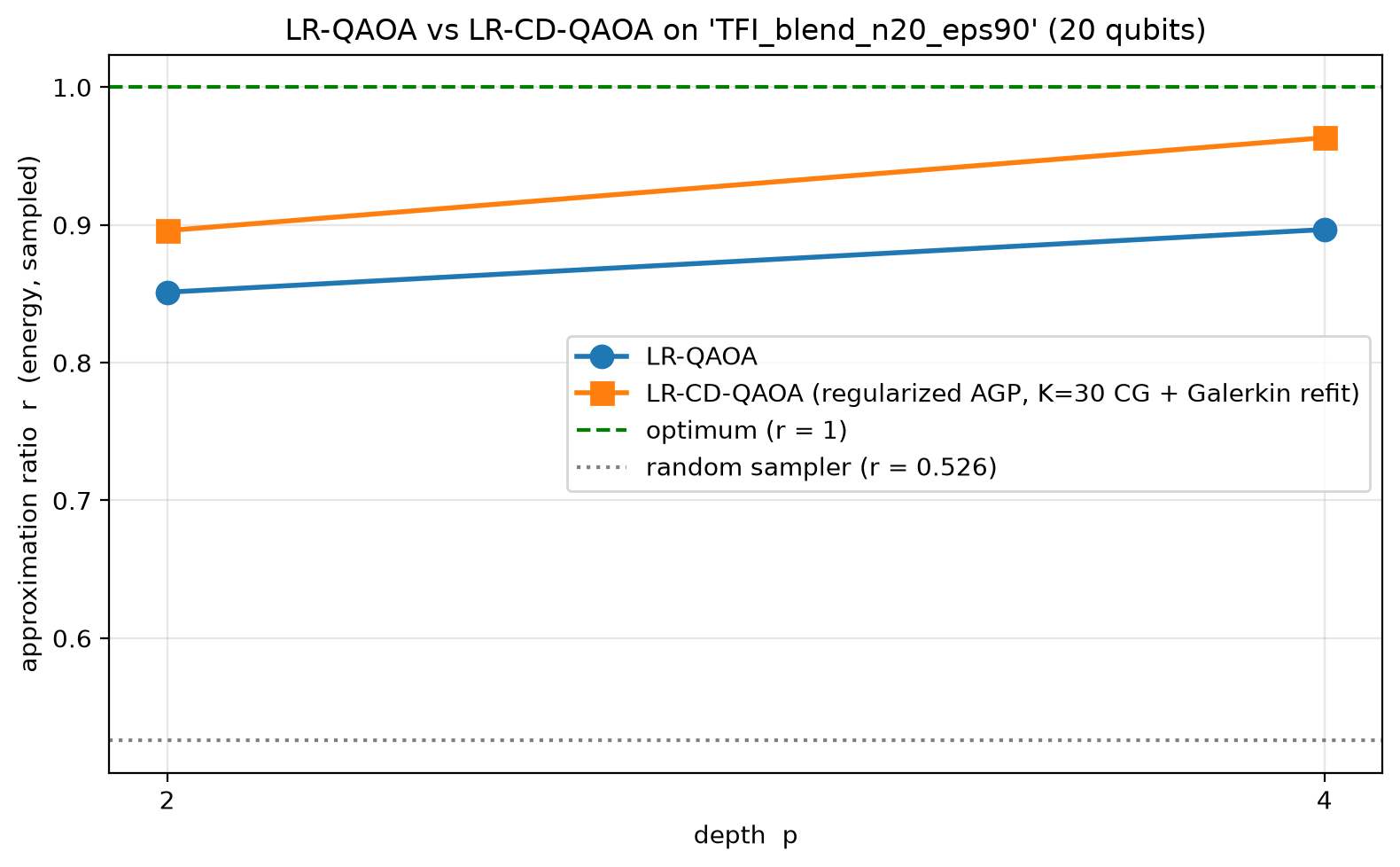}
    \includegraphics[width=.49\linewidth]{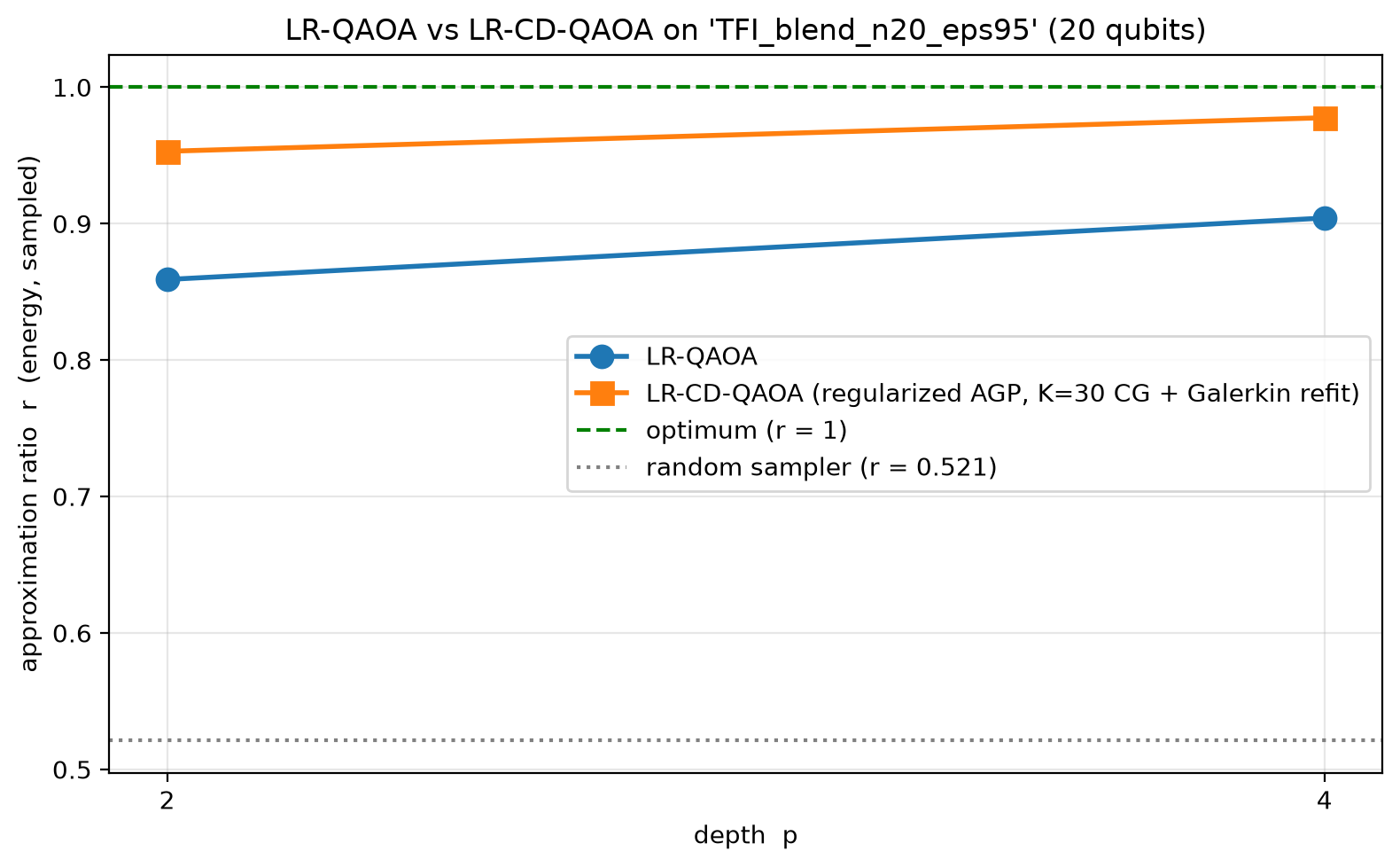} \\
    \includegraphics[width=.49\linewidth]{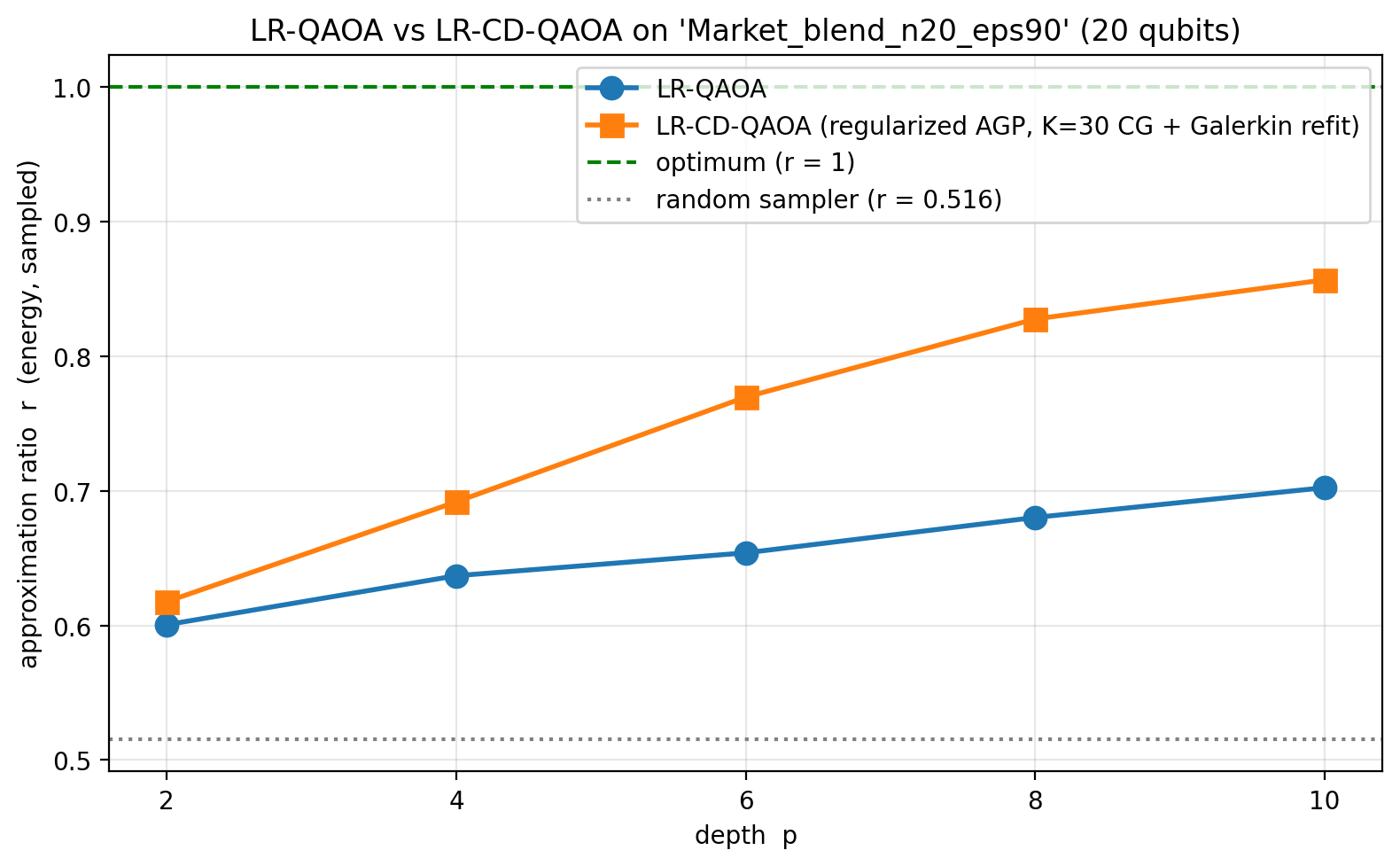}
    \includegraphics[width=.49\linewidth]{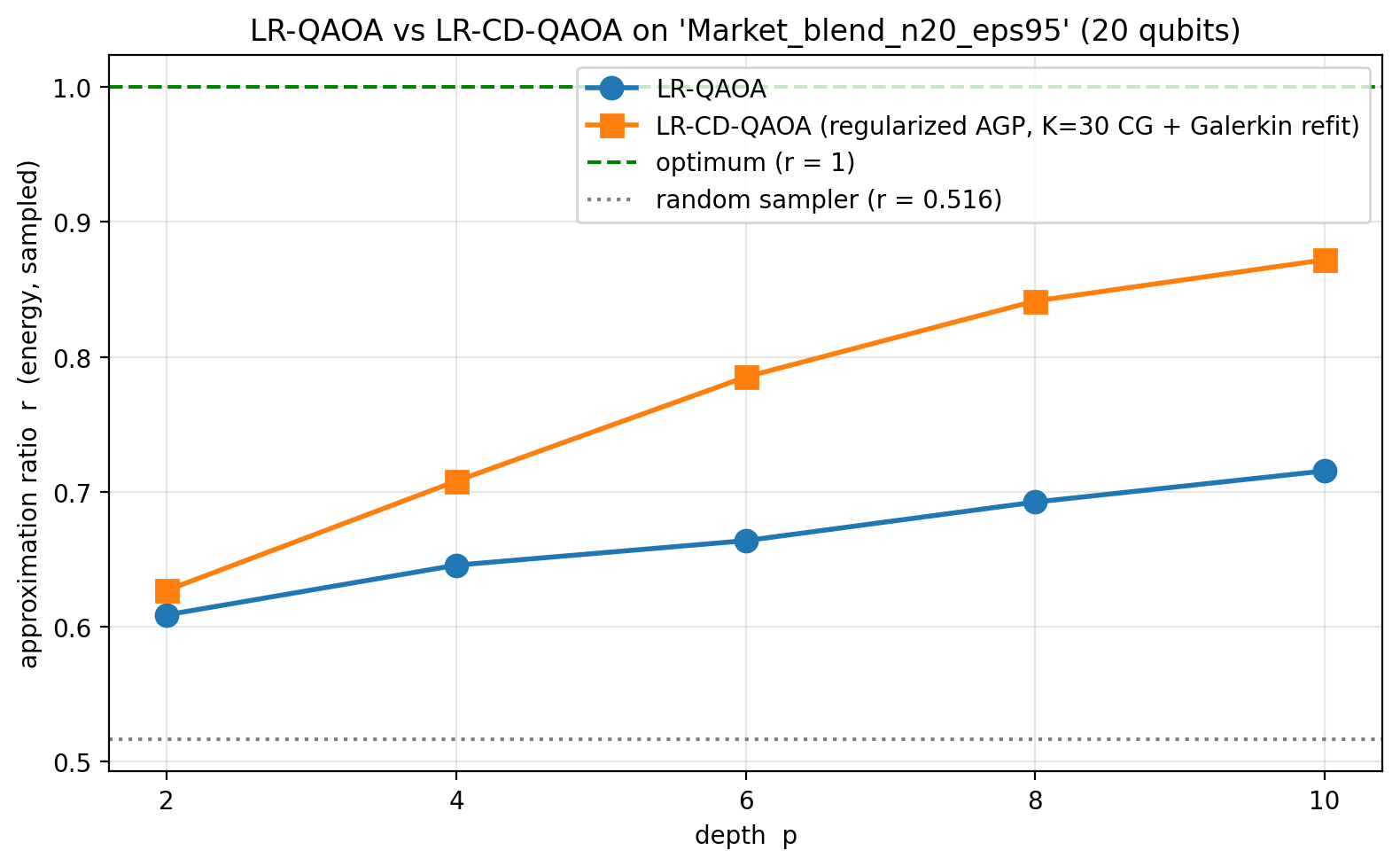} \\
    \caption{
    Approximation ratio on the FC perturbed MaxCut (upper panel) and MarketSplit (lower panel)  instances with \(n=20\) qubits.
    Left: \(\epsilon=0.90\). Right: \(\epsilon=0.95\).
    The parameter \(\epsilon\) controls the contribution of the FC component in
    \(H_C(\epsilon)=\epsilon H_{\rm FC}+(1-\epsilon)H\), so both instances are close to
    the structured FC regime while retaining a random weighted-MaxCut perturbation.
    The dashed green line denotes the optimal value \(r=1\), while the dotted grey line denotes
    the random-sampling baseline.  LR-QAOA improves with the depth \(p\), but remains visibly
    separated from the optimum.  In contrast, LR-CD-QAOA, using the regularised AGP computed
    with \(K=30\) conjugate-gradient iterations followed by a Galerkin refit, consistently achieves
    higher ratios and approaches the optimal value already at moderate depth.}
    \label{fig:maxcut}
\end{figure}

\section{Conclusions}
\label{sec:conclusions}
We have introduced a Pauli-sparse counterdiabatic extension of linear-ramp QAOA based on an inexact CG solution of the regularized AGP equation. Rather than implementing the AGP as a dense operator, our approach treats it as a computational object from which a sparse set of useful Pauli rotations can be extracted. The framework integrates five key components: a regularized AGP equation, a matrix-free Pauli-coordinate Krylov solver, on-the-fly truncation of the Pauli expansion, a Galerkin refit on the discovered support, and an a posteriori residual certificate. This method is motivated by optimization Hamiltonians where exponentially small spectral gaps make exact single-eigenstate tracking impractical. For quantum optimization, a more realistic goal is preserving probability mass within a low-energy subspace. The regularized AGP inherently supports this: the parameter $\eta$ acts as an energy-resolution scale that suppresses transitions below $\sqrt{\eta}$ while retaining those across larger spectral separations. This allows the counterdiabatic correction to ignore microscopic splittings inside the target manifold while reducing leakage to higher excited states. Algorithmically, the truncated Pauli-CG method offers a practical construction route. Because the action of $\mathcal{L}_H^2 + \eta I$ is evaluated via symbolic Pauli commutators, the dense Liouvillian matrix is never explicitly formed. Truncation during the CG iterations transforms the linear solve into a support-discovery procedure that optimizes for a finite gate budget. The subsequent Galerkin refit then optimizes the coefficients on this sparse support, while the residual certificate quantifies the remaining approximation error. Numerical experiments validate this mechanism. On Ferromagnetic Chain (FC) instances, the regularized counterdiabatic layer substantially increases the success probability within the low-energy manifold without resolving the exponentially small ground-state splitting. Experiments with sparse CG approximations demonstrate that a small number of Krylov iterations and a moderate Pauli-support budget capture most of the dense AGP's benefits. Finally, on difficult QUBO instances, the truncated Pauli-CG correction achieves significantly higher approximation ratios than plain LR-QAOA, confirming that these improvements survive both Pauli truncation and the Galerkin refit. Several directions remain open. Building on the connection highlighted in Remark~\ref{rem:qtt_stuff}, a natural next step is reformulating this framework within the TT/QTT formalism to directly compare memory footprints and leverage recent randomized tensor-rounding techniques~\cite{MR4968834}. Additionally, implementing the Pauli sequence arithmetic within a highly scalable, distributed-memory software package represents an important avenue for future engineering.

\ACKNOWLEDGMENT{We would like to express our sincere gratitude to [acknowledge individuals, organizations, or institutions] for their invaluable contributions to this research. We are also grateful to [mention any additional acknowledgements, such as technical assistance, data providers, or colleagues] for their support and assistance throughout the course of this work.}



\bibliographystyle{references} 
\bibliography{refs} 





  



\begin{APPENDIX}{Proofs of Theoretical Results} \label{appendix}

This appendix contains the detailed proofs for all the results established in the paper, which are placed here due to their technical nature and length to ensure a more concise presentation of our main contributions.

\begin{proof}{Proof of Lemma~\ref{lem:filter-formula}}
Let $E_{ab}=|a\rangle\langle b|$.  In column-stacking notation, $\vecop(E_{ab})=   \overline b  \otimes a$.  Since $H|a\rangle=E_a|a\rangle$, we have $(I_N\otimes H)\vecop(E_{ab})=E_a\vecop(E_{ab})$, and $(H^T\otimes I_N)\vecop(E_{ab})=E_b\vecop(E_{ab})$. Therefore $L_H\vecop(E_{ab})=(E_a-E_b)\vecop(E_{ab}) =\omega_{ab}\vecop(E_{ab})$, and $L_H^2\vecop(E_{ab})=\omega_{ab}^2\vecop(E_{ab})$. Equivalently, $\cL_H(E_{ab})=\omega_{ab}E_{ab}$, and $\cL_H^2(E_{ab})=\omega_{ab}^2E_{ab}$. Taking the $(a,b)$ entry of \eqref{eq:unregularised-agp}, or equivalently the component along $\vecop(E_{ab})$ in \eqref{eq:unregularised-agp-kron}, gives $\omega_{ab}^2(A_\lambda)_{ab} =-\ii\omega_{ab}G_{ab}$. If $a\neq b$ and $\omega_{ab}\neq0$, this yields $(A_\lambda)_{ab}=-\frac{\ii G_{ab}}{\omega_{ab}}$. Similarly, the component along $\vecop(E_{ab})$ in \eqref{eq:regularised-agp-kron} gives $(\omega_{ab}^2+\eta)(A_\lambda^{(\eta)})_{ab} =-\ii\omega_{ab}G_{ab}$, which proves the second formula.  Dividing by the unregularised entry proves the filter relation.
\end{proof}

\begin{proof}{Proof of Proposition~\ref{prop:distance}}
By Lemma~\ref{lem:filter-formula}, for $a\neq b$,
$
    (A_\lambda^{(\eta)}-A_\lambda)_{ab}
    =-\ii G_{ab}\left(\frac{\omega_{ab}}{\omega_{ab}^2+\eta}-\frac1{\omega_{ab}}\right)
    =\ii G_{ab}\frac{\eta}{\omega_{ab}(\omega_{ab}^2+\eta)}.
$
Squaring and summing over $a\neq b$ gives~\eqref{eq:distance-exact}.  If
$|\omega_{ab}|\geq\Delta$, then
$
    \frac{\eta}{|\omega_{ab}|(\omega_{ab}^2+\eta)}
    \leq
    \frac{\eta}{\Delta(\Delta^2+\eta)}$.
The bound follows immediately using the fact that $\| \cdot \|_{HS}$ is unitarily invariant.
\end{proof}

\begin{proof}{Proof of Proposition~\ref{prop:boundedness}}
The scalar function
$g_\eta(\omega)=\frac{|\omega|}{\omega^2+\eta}$ is maximized at $|\omega|=\sqrt\eta$, with maximum value $1/(2\sqrt\eta)$.  From \eqref{eq:filter-formula}, $
    |(A_\lambda^{(\eta)})_{ab}|
    \leq \frac{1}{2\sqrt\eta}|G_{ab}|$. 
Squaring and summing gives the result.
\end{proof}

\begin{proof}{Proof of Proposition~\ref{prop:conditioning}}
Vectorization is a linear isomorphism, so the spectrum and conditioning of the
operator equation can be studied through the ordinary matrix
$L_H^2+\eta I_{N^2}$.  Since $H=H^\dagger$, both $I_N\otimes H$ and
$H^T\otimes I_N$ are Hermitian matrices.  Hence $L_H=L_H^\dagger$. Consequently,
$L_H^2=L_H^\dagger L_H\succeq0$. Therefore all eigenvalues of $L_H^2+\eta I_{N^2}$ are at least $\eta>0$, and
the matrix is Hermitian positive definite. We now identify its eigenvalues.  Let $E_{ab}=|a\rangle\langle b|$.  As shown
in the proof of Lemma~\ref{lem:filter-formula}, $L_H\vecop(E_{ab})=\omega_{ab}\vecop(E_{ab})$. Thus $(L_H^2+\eta I_{N^2})\vecop(E_{ab})
    =(\omega_{ab}^2+\eta)\vecop(E_{ab})$.
Hence the eigenvalues of the regularised coefficient matrix are exactly
$   \omega_{ab}^2+\eta,
    \qquad a,b=1,\ldots,N$.
The smallest full-space eigenvalue is at least $\eta$.  The largest eigenvalue
is at most $\norm{L_H}_2^2+\eta=\Omega^2+\eta$.  This gives
\eqref{eq:full-conditioning}.  On an off-diagonal gauge-fixed subspace for
which $|\omega_{ab}|\geq\Delta$, the smallest eigenvalue is at least
$\Delta^2+\eta$, while the same upper bound holds.  This proves
\eqref{eq:gauge-fixed-conditioning}.

Finally, for every matrix $X$,
$
    \norm{\cL_H(X)}_{HS}=\norm{HX-XH}_{HS}
    \leq \norm{HX}_{HS}+\norm{XH}_{HS}
    \leq 2\norm{H}_2\norm{X}_{HS}$. 
Therefore $\Omega=\norm{\cL_H}_2=\norm{L_H}_2\leq2\norm{H}_2$.  Substitution
into \eqref{eq:full-conditioning} gives \eqref{eq:conditioning-h-bound}.
\end{proof}

\begin{proof}{Proof of Lemma~\ref{lem:spectral_floor}}
By Proposition~\ref{prop:conditioning} the eigenvalues of $B_\eta$ are exactly
$\omega_{ab}^2+\eta$, $a,b=1,\dots,N$, so the spectrum lies in
$[\eta,\Omega^2+\eta]$ and the lower edge is attained at $\omega_{aa}=0$.
Restriction to a subspace can only raise the smallest eigenvalue { by the Courant-Fischer Theorem~\cite[Theorem 4.2.11]{HornJohnson}}.
\end{proof}

\begin{proof}{Proof of Lemma~\ref{lem:residual_support}}
Writing $A_{\cS}^{(\eta)}=\sum_{P_j\in\cS}a_jP_j$,
\[
\cR_{\cS}
=
-\ii\,\cL_H(G)
-\sum_{P_j\in\cS}a_j\,\cL_H^2(P_j)
-\eta\sum_{P_j\in\cS}a_jP_j ,
\]
so every Pauli string carrying a nonzero coefficient of $\cR_{\cS}$ appears in
$-\ii\,\cL_H(G)$, in one of the expansions $\cL_H^2(P_j)$, or in
$\{P_j:P_j\in\cS\}$ itself (see the term $-\eta A_{\cS}^{(\eta)}$). Since the Pauli strings are
orthonormal under $\langle\cdot,\cdot\rangle_{\rm HS}$, the coefficient of any
$P\notin\cR(\cS)$ vanishes, hence $\sigma_P=0$.
\end{proof}

\begin{proof}{Proof of Lemma~\ref{lem:locality_bound}}
Let \(P\) be a Pauli string and write
$
H=\sum_{\alpha}h_{\alpha}Q_{\alpha}$, and
$\mathcal L_H(P)=[H,P]
=\sum_{\alpha}h_{\alpha}[Q_{\alpha},P]$.
A term \([Q_{\alpha},P]\) is nonzero only if \(Q_{\alpha}\) and \(P\)
anticommute. In particular, this can happen only if
$
\operatorname{supp}(Q_{\alpha})\cap \operatorname{supp}(P)\neq \emptyset$.
By assumption, each qubit belongs to the support of at most \(\Delta_H\)
Hamiltonian terms \(Q_{\alpha}\). Since \(P\) has \(\operatorname{wt}(P)\)
non-identity sites, there are therefore at most $\Delta_H \operatorname{wt}(P)$
Hamiltonian terms whose support intersects \(\operatorname{supp}(P)\).
Hence at most \(\Delta_H\operatorname{wt}(P)\) commutators
\([Q_{\alpha},P]\) can be nonzero. Consequently, $
|\operatorname{supp}(\mathcal L_H(P))|
\le
\Delta_H \operatorname{wt}(P)$.

Moreover, each nonzero commutator of two Pauli strings is again a scalar
multiple of a single Pauli string. Since $\operatorname{wt}(Q_{\alpha})\le 2$, 
and \(Q_{\alpha}\) must overlap \(\operatorname{supp}(P)\) in at least one
qubit in order to fail to commute with \(P\), multiplying \(P\) by
\(Q_{\alpha}\) can introduce at most one new non-identity site. Therefore
each Pauli string appearing in \([Q_{\alpha},P]\) has weight at most $\operatorname{wt}(P)+1$.

We now apply the same argument once more. Every Pauli string \(P'\)
appearing in \(\mathcal L_H(P)\) satisfies
$
\operatorname{wt}(P')\le \operatorname{wt}(P)+1$.
For each such \(P'\), the previous bound gives
$|\operatorname{supp}(\mathcal L_H(P'))|
\le
\Delta_H \operatorname{wt}(P')
\le
\Delta_H(\operatorname{wt}(P)+1)$. 
Since \(\mathcal L_H(P)\) contains at most
\(\Delta_H\operatorname{wt}(P)\) Pauli strings, the total number of Pauli
strings that can appear after applying \(\mathcal L_H\) a second time is
bounded by
$
|\operatorname{supp}(\mathcal L_H^2(P))|
\le
\Delta_H\operatorname{wt}(P)\,
\Delta_H(\operatorname{wt}(P)+1)$.
Thus $
|\operatorname{supp}(\mathcal L_H^2(P))|
\le
\Delta_H^2\operatorname{wt}(P)(\operatorname{wt}(P)+1)
\le
\Delta_H^2(\operatorname{wt}(P)+1)^2$.

It remains to bound the size of the residual index set \(\mathcal R(S)\).
Using the definition of \(\mathcal R(S)\) from \eqref{eq:residual_support_set}, the residual can
only contain Pauli strings coming from three sources:
$
\operatorname{supp}(-i\mathcal L_H(G))$,
$S$, and $\bigcup_{P_j\in S}\operatorname{supp}(\mathcal L_H^2(P_j))$. Therefore, by the union bound,
\[
|\mathcal R(S)|
\le
|\operatorname{supp}(-i\mathcal L_H(G))|
+
|S|
+
\sum_{P_j\in S}
|\operatorname{supp}(\mathcal L_H^2(P_j))|.
\]
Let $
w_S:=\max_{P_j\in S}\operatorname{wt}(P_j)$. Then, for every \(P_j\in S\),
$
|\operatorname{supp}(\mathcal L_H^2(P_j))|
\le
\Delta_H^2(w_S+1)^2$.
Substituting this into the previous estimate gives
$
|\mathcal R(S)|
\le
|\operatorname{supp}(-i\mathcal L_H(G))|
+
|S|
+
|S|\Delta_H^2(w_S+1)^2$.
Equivalently,
$
|\mathcal R(S)|
\le
|\operatorname{supp}(-i\mathcal L_H(G))|
+
|S|\left[1+\Delta_H^2(w_S+1)^2\right]$.

Finally, the set \(\mathcal R(S)\) is computable by symbolic Pauli
commutation. Indeed, for each \(P_j\in S\), one computes the nonzero
commutators \([Q_{\alpha},P_j]\), then the nonzero second commutators
\([Q_{\beta},[Q_{\alpha},P_j]]\). Each commutator produces either zero or
one Pauli string up to a scalar coefficient. Hence constructing
\(\mathcal R(S)\) requires only enumerating the strings that actually
appear, giving an implementation cost proportional to the number of
generated strings, namely \(O(|\mathcal R(S)|)\) symbolic string
operations, up to constants depending on the representation of Pauli
strings.
\end{proof}

\begin{proof}{Proof of Proposition~\ref{prop:residual_certificate}}
Since $A_\lambda^{(\eta)}$ solves the regularised AGP equation, we have
$$\cR_{\cS}=(\cL_H^2+\eta I)\bigl(A_\lambda^{(\eta)}-A_{\cS}^{(\eta)}\bigr)=-(\cL_H^2+\eta I)(E) \Rightarrow E=-(\cL_H^2+\eta I)^{-1}(\cR_{\cS}).$$
For any $A$, we have $A = A_{\lambda}^{(\eta)}+A - A_{\lambda}^{(\eta)}$ with $E := A - A_{\lambda}^{(\eta)}$. Using the quadratic form of the objective, we have
\[
\Phi_\eta(A_{\lambda}^{(\eta)}+E)
=
\frac12
\left\langle A_{\lambda}^{(\eta)}+E,\,
\mathcal{B}_\eta(A_{\lambda}^{(\eta)}+E)
\right\rangle_{\rm HS}
{+
\left\langle \ii \cL_{H}(G) ,A_{\lambda}^{(\eta)}+E\right\rangle_{\rm HS}}
+
C,
\]
where \(C\) is independent of \(A\). Expanding the first term gives
\[
\begin{aligned}
\frac12
\left\langle A_{\lambda}^{(\eta)}+E,\,
\mathcal{B}_\eta(A_{\lambda}^{(\eta)}+E)
\right\rangle_{\rm HS}
&=
\frac12
\left\langle A_{\lambda}^{(\eta)},\mathcal{B}_\eta A_{\lambda}^{(\eta)}\right\rangle_{\rm HS}
+
\left\langle E,\mathcal{B}_\eta A_\eta\right\rangle_{\rm HS}
+
\frac12
\left\langle E,\mathcal{B}_\eta E\right\rangle_{\rm HS}.
\end{aligned}
\]
Therefore,
\[
\begin{aligned}
\Phi_\eta(A_{\lambda}^{(\eta)}+E)
&=
\Phi_\eta(A_{\lambda}^{(\eta)})
+
\left\langle E,\mathcal{B}_\eta A_{\lambda}^{(\eta)}\right\rangle_{\rm HS}
{+
\left\langle \ii \cL_{H}(G),E\right\rangle_{\rm HS}}
+
\frac12
\left\langle E,\mathcal{B}_\eta E\right\rangle_{\rm HS}.
\end{aligned}
\]
The middle two terms combine as
\[
\left\langle E,\mathcal{B}_\eta A_{\lambda}^{(\eta)}\right\rangle_{\rm HS}
{+}
\left\langle \ii \cL_{H}(G),E\right\rangle_{\rm HS}
=
\left\langle E,\mathcal{B}_\eta A_{\lambda}^{(\eta)}b{+}\ii \cL_{H}(G)\right\rangle_{\rm HS}.
\]
Since \(A_{\lambda}^{(\eta)}\) solves the full regularised AGP equation, the linear term vanishes, i.e.,
\[
\left\langle E,\mathcal{B}_\eta A_{\lambda}^{(\eta)}b{+}\ii \cL_{H}(G)\right\rangle_{\rm HS}=0.
\]
Consequently, only the pure quadratic error term remains:
\[
\underbrace{\Phi_\eta(A_{\lambda}^{(\eta)}+E)}_{=\Phi_\eta(A)}-\Phi_\eta(A_{\lambda}^{(\eta)})
=
\frac12
\left\langle E,b{\mathcal{B}_\eta} E\right\rangle_{\rm HS}.
\]
Finally, considersing $A=A_{\mathcal{S}}^{(\eta)}$ and using $E=-(\cL_H^2+\eta I)^{-1}(\cR_{\cS})$,
\[
\Phi_\eta(A_{\cS}^{(\eta)})-\Phi_\eta\bigl(A_\lambda^{(\eta)}\bigr)
=
\tfrac12\bigl\langle (\cL_H^2+\eta I)^{-1}(\cR_{\cS}),\,\cR_{\cS}\bigr\rangle_{\rm HS}
\le
\frac{\norm{\cR_{\cS}}_{\rm HS}^2}{2\eta},
\]
by the Lemma~\ref{lem:spectral_floor}. Analogously, we have $\norm{E}_{HS}\le\frac{\norm{\cR_{\cS}}_{HS}}{\eta}$. It remains to express the residual norm in terms of its Pauli coefficients.
Since the Pauli strings form an orthonormal basis with respect to the
normalised Hilbert--Schmidt inner product, Parseval's identity gives
$
\|\mathcal{R}_{\mathcal{S}}\|_{\rm HS}^2
=
\sum_{P\in\mathcal P_n}
\left|
\langle P,\mathcal{R}_{\mathcal{S}}\rangle_{\rm HS}
\right|^2$.
By definition, $\sigma_P:=\langle P,\mathcal{R}_{\mathcal{S}}\rangle_{\rm HS}$. Hence
$
\|\mathcal{R}_{\mathcal{S}}\|_{\rm HS}^2
=
\sum_{P\in\mathcal P_n}
|\sigma_P|^2$.
By
Lemma~\ref{lem:residual_support}, the Galerkin orthogonality
\eqref{eq:galerkin_ortogonality}, we have (i)--(ii).
\end{proof}

\begin{proof}{Proof of Proposition~\ref{prop:exact_agp}}
By Lemma~\ref{lem:residual_support} and \eqref{eq:closed_support},
$\operatorname{supp}(\cR_{\cS})\subseteq\cR(\cS)\subseteq\cS$, while the
Galerkin orthogonality \eqref{eq:galerkin_orthogonality} gives
$\sigma_{P_j}=0$ for every $P_j\in\cS$. Hence every Pauli coefficient of
$\cR_{\cS}$ vanishes, i.e.\ $\cR_{\cS}=0$, and $A_{\cS}^{(\eta)}$ solves the full
regularised AGP equation \eqref{eq:regularised-agp-kron}. Since
$B_\eta\succ 0$ by Lemma~\ref{lem:spectral_floor}, the solution is unique,
so $A_{\cS}^{(\eta)}=A_\lambda^{(\eta)}$.
\end{proof}

\end{APPENDIX}


\end{document}